\newcommand{\N}{\mathbb{N}}
\newcommand{\R}{\mathbb{R}}
\title{On the Dynamics of\\ Bounded-Degree Automata Networks}
\author{Julio Aracena\inst4\and Florian Bridoux\inst3\and Maximilien Gadouleau\inst5\and Pierre Guillon\inst1 \and Kévin Perrot\inst2 \and Adrien Richard\inst3 \and Guillaume Theyssier\inst1}
  \institute{Aix-Marseille Université, CNRS, I2M UMR7373, Marseille, France
    \and Aix-Marseille Université, Univ. Toulon, CNRS, LIS UMR7020, Marseille, France
    \and Univ. C\^{o}te d'Azur, CNRS, I3S UMR 7271, Sophia Antipolis, France
    \and CI2MA and Departamento de Ingeniería Matemática, Universidad de Concepción, Chile
    \and Department of Computer Science, Durham University, Durham, UK
}
\newcommand\funs{\mathcal{F}}
\newcommand\dyna{\mathcal{D}}
\newcommand{\ipart}[1]{\left\lfloor #1\right\rfloor}
\newcommand{\card}[1]{\left|#1\right|}
\newcommand{\rank}{\text{rk}}
\newcommand{\inNeighbors}{N^{-}}
\newcommand{\GF}{\mathrm{GF}}
\newcommand{\B}{\{0,1\}}
\def\fp{\mathrm{fp}}
\newcommand\ie{, \textit{i.e.},\ }
\begin{document}
\maketitle

\begin{abstract}
  Automata networks can be seen as bare finite dynamical systems, but their growing theory has shown the importance of the underlying communication graph of such networks.
  This paper tackles the question of what dynamics can be realized up to isomorphism if we suppose that the communication graph has bounded degree.
  We prove several negative results about parameters like the number of fixed points or the rank. 
   We also show that we can realize with degree 2 a dynamics made of a single fixed point and a cycle gathering all other configurations.
  However, we leave open the embarrassingly simple question of whether a dynamics consisting of a single cycle can be realized with bounded degree, although we prove that it is impossible when the network become acyclic by suppressing one node, and that realizing precisely a Gray code map is impossible with bounded degree.
  Finally we give bounds on the complexity of the problem of recognizing such dynamics.
\end{abstract}

\section{Introduction}

One possible definition for automata network is simply a self-map $F:Q^n\to Q^n$.
This definition forgets about the computational aspect of the model, which consists, through a dual point of view, in a set of $n$ automata linked by some arcs, each holding a state in $Q$ that they can update depending on that of their incoming neighbors.

\begin{figure}
    \centering
  \subfigure[a cycle of length $2^n$.]{\begin{tikzpicture}[every node/.style={inner sep=.5pt}]
      \begin{oodgraph}
        \addcycle[xshift=8cm,radius=1.5cm,edges style=-stealth]{16};
      \end{oodgraph}
    \end{tikzpicture}\label{f:2n}}
  \qquad
    \subfigure[a cycle of length $2^n-1$ and an isolated vertex.]{\begin{tikzpicture}[every node/.style={inner sep=.5pt}]
      \begin{oodgraph}
        \addcycle[xshift=4cm,radius=1.4cm,edges style=-stealth]{15};
        \addcycle[xshift=4cm,edges style=-stealth,loop style={out=70, in=110,looseness=16}]{1};
      \end{oodgraph}\end{tikzpicture}\label{f:2n-1}}
\qquad
\subfigure[a cycle of length $2^n-C$ and a size-$C$ forest 
plugged to it.]{\label{f:btrees}
  \begin{tikzpicture}[every node/.style={inner sep=.5pt}]
      \begin{oodgraph}
        \addcycle[xshift=0cm,radius=.8cm,edges style=-stealth,nodes prefix=mycycle]{9};
        \addbeard[attach node=mycycle->6,radius=.5]{1};
        \addbeard[attach node=mycycle->4,radius=.5]{1};
        \addbeard[attach node=mycycle->1,radius=.5]{2};
        \addbeard[attach node=mycycle->1->1,radius=.5]{3};
      \end{oodgraph}\end{tikzpicture}}
    \caption{
      Three examples of dynamics families.
    }
    \label{f:cycles}
\end{figure}
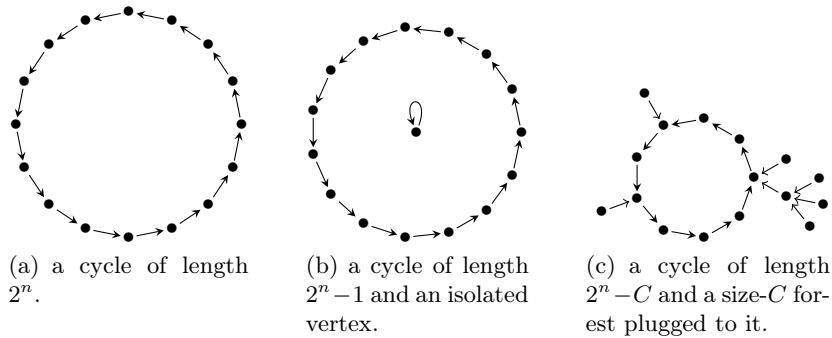

As a model of computation generalizing finite cellular automata, this communication graph is quite relevant, and it is natural to constrain it, in particular the possible degrees: a small degree indeed represents simple local computations, whereas a complete communication graph can yield any dynamics $F: Q^n\to Q^n$ (see \cite{bppmr23}).

The minimal communication graph, often called interaction graph, plays an important role in automata network theory (see \cite{Gadouleau_2019} for a survey).
It was already established that some dynamics require high degree, and even a dense communication graph \cite{bppmr23}.   

In this paper, we address the question of how restrictions on the communication graph, and in particular bounding its degrees, can impose restrictions on the possible dynamics.
For instance, in Figure~\ref{f:cycles}, one can see three (families of) graphs representing possible dynamics.
Which are the ones that can be realized by communication graphs with small degree?

In Section \ref{sec:nonlocdyn}, we establish bounds on different parameters of the dynamics depending on the degree of communication graphs.
This in particular allows to show that the family of dynamics from Figure~\ref{f:btrees} cannot be realized with a bounded-degree communication graph.

In Section~\ref{s:fsr}, we establish positive results through various constructions. In particular, we show that the dynamics of the type from Figure~\ref{f:2n-1} can be realized with communication graphs of degree $2$ (Theorem~\ref{theo:near-hamiltonian}), and prove that we can realize maps of rank ${q^n-2}$ with degree ${2n/3}$ (Theorem \ref{theo:highrank}).

Then, in Section~\ref{sec:booleancase}, we focus on the Boolean case and prove several negative results. First, we show that having Hamiltonian dynamics (a single cycle containing all configurations, see Figure~\ref{f:2n}) requires at least degree $3$ (Theorem~\ref{theorem:no_affine_hamiltonian}). We also study the particular case of communication graphs which become acyclic by removing a single node: we show that, in this case, producing Hamiltonian dynamics requires degree $n$ (Theorem~\ref{thm:even}). Besides, we establish (Theorem~\ref{thm:gray_code}) that a Boolean network requires degree at least $\log(n)$ to produce a Gray code map, which are particular examples of Hamiltonian dynamics (where any configuration is at Hamming distance 1 from its image).

Finally, in Section~\ref{sec:complexity}, we give upper and lower bounds for the computational complexity of recognizing dynamics that can be realized with a bounded-degree communication graph.

However, we leave open the question about the minimum degree necessary to realize Hamiltonian dynamics. Prior to this work, J. Aracena and A. Zapata formulated the conjecture that such dynamics requires unbounded degree. This also appears in \cite{zapata22} with various intermediate results. 

\section{Definitions and notations}

Consider a finite {alphabet} $Q$ with $q=\card Q$ symbols.
Without loss of generality, $Q=\{0,\dots,q-1\}$.
Consider also a set ${V=\{1,\ldots,n\}}$ of $n$ \emph{nodes}.
A \emph{configuration} $x=(x_i)_{i\in V}\in Q^V$ is a function $V \to Q$. 
For every $U\subseteq V$, we denote $x_U:U\to Q$ the restriction of $x$ to $U$\ie $(x_U)_i=x_i$ for every $i \in U$.
Given a \emph{pattern} $u \in Q^U$, we define the \emph{cylinder} 
  $[u] = \{x\in Q^V:x_{U}=u\}$.

An \emph{automata network} (AN) is a map $F:{Q^V\to Q^V}$.
It can be represented as a \emph{dynamics graph}, like those from Figure~\ref{f:cycles}, by linking each configuration $x$ to its image $F(x)$.
This graph is denoted by ${\dyna(F)}$.
Two ANs are called isomorphic if their dynamics graphs are isomorphic.
A configuration $x$ such that ${F(x)=x}$ is called a \textit{fixed point}, and the number of fixed points of $F$ is denoted ${\fp(F)}$.
The set of ANs with alphabet of size $q$ and with $n$ nodes is denoted ${\funs(n,q)}$.

For $F\in\funs(n,q)$ and $k\in\N$, let us define $Y_k=\{y\in Q^V\mid\card{F^{-1}(y)}=k\}$.
We also note $Y_{\ge\ell}=\bigsqcup_{k\ge\ell}Y_k$.
The {\em rank} of $F$ is its number of images: $\rank(F)=Y_{\ge1}$.
Remark that $Q^V=Y_{\ge0}$, and that $\sum_{k\in\N}k\card{Y_k}=\card{F^{-1}(Q^V)}=q^n$, so that $\card{Y_0}=q^n-\rank(F)=\sum_{k\ge1}(k-1)\card{Y_k}=\sum_{k\ge2}(k-1)\card{Y_k}$.
This quantity is bounded between $\card{Y_{\ge2}}$ and $\card{Y_{\ge2}}(\max_{y\in Q^V}\card{F^{-1}(y)}-1)$.
In particular, $\rank(F)<q^n\iff\card{Y_0}>0\iff\card{Y_{\ge2}}>0$.
Configurations from $Y_0$ are sometimes called \emph{orphans}, whereas \emph{collisions} are pairs of configurations with the same image.

\paragraph{Communication graph.}
A \emph{communication graph} for $F$ is a directed graph over vertex set $V$ with edges in $E\subset V^2$ such that for every ${j\in V}$,
and every $x,x'\in Q^V$ which agree over the in-neighborhood $\inNeighbors(j)={\{i\in V\mid (i,j)\in E\}}$ of $j$, we have $F(x)_{j}=F(x')_{j}$.
In other words, the value $F(x)_j$ is updated thanks to a local function $f_j:Q^V\to Q$ which depends only on the values $x_{\inNeighbors(j)}$.
For $U\subseteq V$, we may also denote $f_U(x)=F(x)_U$.
The \emph{interaction graph} of $F$, denoted $G(F)$, is the minimal communication graph of $F$.
Its \emph{degree} is the maximum in-degree of a vertex in $G(F)$.
By extension, the \emph{degree} of $F$ is the degree of its interaction graph.
We denote by ${\funs(n,q,d)}$ the set of ANs from $\funs(n,q)$ with degree at most $d$.

A first remark is that if $u\in Q^{\inNeighbors(j)}$, then $\card{[u]}=2^{n-\card{\inNeighbors(j)}}=2^{n-d}$ if the in-degree of $j$ in the communication graph is $d$.

Adapting this argument to pre-images, we get one key tool for the following section, emphasizing that bounded-degree ANs are very specific dynamical systems.
\begin{lemma}\label{lem:localrigidity}
  Consider ${F\in\funs(n,q,d)}$ and ${U\subseteq V}$ with ${\card U\le\lfloor n/d\rfloor}$.
  Then for any pattern ${u\in Q^U}$, $\card{F^{-1}([u])}$ is a multiple of ${q^{n-\card Ud}}$.
\end{lemma}
\begin{proof}
  Since the degree of $G(F)$ is upper-bounded by $d$, $f_U$ only depends of $Y = \bigcup_{i \in U}\inNeighbors(i)$, so that $\card Y\le\card Ud$.
  In other words, for every $x\in Q^{U}$ such that $f_U(x)=u$, we obtain $f_U([x_Y])=u$.
  Hence, $\card{F^{-1}([u])}=\card{\{v\in Q^{Y}\mid f_U([v])=u\}}q^{n-\card Y}$.
  Since $\card Y\le\card Ud$, this is a multiple of $q^{n-\card Ud}$.
  \qed
\end{proof}

\section{Non-local dynamics}
\label{sec:nonlocdyn}

Here we prove that some kinds of dynamics are intrinsically non-local in the sense that they cannot be realized by bounded-degree networks, even up to isomorphism.

\begin{remark}
  The number of nonisomorphic bijective ANs is  $p( q^n )$ (where $p$ is the partition function), which is asymptotically given by the Hardy-Ramanujan formula (see \cite{Andrews_1984}): 
  \[
    p( q^n ) \sim \frac{ 1 }{ 4 q^n \sqrt{3} } \exp( \pi \sqrt{ 2 q^n /3 } ).
  \]
  which grows doubly exponentially in $n$.
  However, there are only $(q^{q^d})^n$ AN with degree $\le d$, which is simply exponential in $n$.
  So few bijective dynamics have a realization with bounded degree.
\end{remark}

The identity AN on $Q^V$ ($F(x) = x$ for all $x$) has $q^n$ fixed points and degree $1$. Our first result shows that if $G(F)$ has bounded degree and $F$ is not the identity, then the number of fixed points of $F$ cannot be close to $q^n$. 

\begin{proposition}\label{pro:notidentity}
Let ${F\in\funs(n,q,d)}$ with $\fp(F)<q^n$. Then ${\fp(F)\leq q^n - q^{n-d}}$.
\end{proposition}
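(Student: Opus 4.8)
The plan is to localise the whole problem to a single coordinate. Since $\fp(F)<q^n$, the map $F$ differs from the identity, so there are a configuration $x^\ast$ and a node $j\in V$ with $F(x^\ast)_j\neq x^\ast_j$. I fix such a $j$ and set $U=\inNeighbors(j)$, so that $\card U\le d$ and, by definition of a communication graph, $F(\cdot)_j$ depends only on the restriction to $U$. Writing $\mathrm{Fix}_j=\{x\in Q^V : F(x)_j=x_j\}$, every fixed point of $F$ lies in $\mathrm{Fix}_j$, hence $\fp(F)\le\card{\mathrm{Fix}_j}$. It therefore suffices to prove that the coordinate-$j$ non-fixed set $Q^V\setminus\mathrm{Fix}_j$ has at least $q^{n-d}$ elements, for then $\fp(F)\le\card{\mathrm{Fix}_j}=q^n-\card{Q^V\setminus\mathrm{Fix}_j}\le q^n-q^{n-d}$.

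The crucial step is a case distinction on whether $j$ lies in its own in-neighborhood $U$. If $j\in U$, I use the cylinder $[x^\ast_U]$: every $x\in[x^\ast_U]$ agrees with $x^\ast$ on $U$, so $F(x)_j=F(x^\ast)_j$, while $x_j=x^\ast_j$ because $j\in U$; since $F(x^\ast)_j\neq x^\ast_j$, the entire cylinder $[x^\ast_U]$ is non-fixed at $j$, and it has size $q^{n-\card U}\ge q^{n-d}$. If instead $j\notin U$, then $F(\cdot)_j$ does not depend on $x_j$ at all, so for each of the $q^{n-1}$ choices of $x_{V\setminus\{j\}}$ there is exactly one value of $x_j$ making $x$ fixed at $j$; hence $\card{\mathrm{Fix}_j}=q^{n-1}$ exactly and the non-fixed set has size $q^n-q^{n-1}=(q-1)q^{n-1}\ge q^{n-1}\ge q^{n-d}$, using $q\ge2$ and $d\ge1$.

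In either case the coordinate-$j$ non-fixed set has at least $q^{n-d}$ elements, which closes the argument. I expect the main (and really the only) subtlety to be isolating this self-loop case split: the clean cylinder argument for $j\in U$ breaks down when $j\notin U$, because the fibres in direction $j$ then contain one fixed and several non-fixed configurations, and one must instead exploit the independence of $F(\cdot)_j$ from $x_j$ to count $\mathrm{Fix}_j$ precisely. The remaining inequalities are elementary; the statement is meant for $d\ge1$, and a degenerate degree-$0$ map (necessarily constant) is covered by applying the bound with $d=1$, since $\funs(n,q,0)\subseteq\funs(n,q,1)$.
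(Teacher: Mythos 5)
Your proof is correct and follows essentially the same route as the paper's: locate a node $j$ and configuration where $F$ disagrees with the identity, then split on whether $j\in\inNeighbors(j)$, using the cylinder $[x^\ast_{\inNeighbors(j)}]$ of non-fixed configurations in one case and the independence of $f_j$ from $x_j$ (one fixed extension per pattern on $V\setminus\{j\}$) in the other. The only differences are cosmetic: you count via $\mathrm{Fix}_j$ and its complement rather than bounding $\fp(F)$ directly, and you explicitly flag the degenerate $d=0$ case, which the paper leaves implicit.
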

\begin{proof}
Since $F$ is not the identity map, there exist $i\in V$ and $x \in Q^V$ such that $f_i(x) \neq x_i$. 
  There are two cases. If $i \notin \inNeighbors(i)$, then every pattern $u \in Q^{V \setminus \{i\}}$ admits two extensions $y,y'\in[u]$, with $y_i\ne y'_i$, but $f_i(y)=f_i(y')$, so that at most one of them is a fixed point.
  Hence, ${\fp(F)\leq q^n-q^{n-1} \leq  q^n - q^{n-d}}$.
  On the other hand, if $i \in \inNeighbors(i)$, then let $u = x_{\inNeighbors(i)}$;
  for every configuration $y \in [u]$, $f_i(y) = f_i(x) \neq x_i = y_i$ and $y$ is not a fixed point.
  Therefore, ${\fp(F)\leq  q^n - \card{[u]}\le q^n-q^{n-d}}$.
  \qed
\end{proof}

\begin{remark}
The bound from the previous lemma is tight: indeed let $F(x) = x$ if $x_i\neq 0$ for some $1\leq i\leq d$ and $F(x)=(\pi(x_1),x_2,\ldots,x_n)$ otherwise, where $\pi$ is a derangement of $Q$. Then $F$ is an AN of degree ${d}$ with ${q^n - q^{n-d}}$ fixed points. Alternatively, consider the graph $G$ on $\{1, \dots, n\}$ with arcs $\{ (i,i) : i \in \{1, \dots, n \} \} \cup \{ (i,1) : i \in \{2, \dots, d\}  \}$. Then $G$ has degree $d$ and following \cite[Theorem 3]{GRF16}, there is an AN with interaction graph $G$ and exactly $q^n - q^{n-d}$ fixed points (namely, $F$ given above).
\end{remark}

Proposition~\ref{pro:notidentity} can be generalised to the powers of $F$. First, note that if $F\in\mathcal{F}(q,n,d)$, then $F^k\in \mathcal{F}(q,n,d^k)$ for every $k\geq 1$ (because from $G(F)$ of degree $\leq d$ we obtain a communication graph for $F^k$ by putting an edge for each path of length $k$).
By combining this remark and Proposition~\ref{pro:notidentity}, we obtain that, if $\fp(F^k)<q^n$, then $\fp(F^k)\leq q^n-q^{n-d^k}$. 

As an application, we can easily find bijections without fixed points that force large communication degrees. Suppose for instance that the dynamics of $F\in\funs(2,n)$ consists of $2^{n-1}-2$ limit cycles of length $2$ and one limit cycle of length $4$. Then $F^2$ has exactly $2^n-4$ fixed points. Denoting by $d$ the degree of $G(F)$, we obtain that $2^n-4=\fp(F^2)\leq 2^n-2^{n-d^2}$ and thus $d\geq \sqrt{n-2}$.

Our second result shows that if $G(F)$ has bounded degree and $F$ is not a bijection, then the rank of $F$ cannot be close to $q^n$.
In \cite{zapata22}, it is shown that Boolean dynamics with rank $2^n-1$ require degree $n$.
Our result extends this to all dynamics with rank $q^n - 1$.

\begin{theorem}\label{thm:almostnonbij}
  Let ${F\in\funs(n,q,d)}$ with $\rank(F)<q^n$.
  Then $\rank(F)\le q^n-\frac{n+\log_q2}{d+\log_q2}$.
  In particular, if $d<n$, then $\rank(F)\le q^n-2$.
\end{theorem}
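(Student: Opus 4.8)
The plan is to reformulate the claim and reduce it to two estimates that pull in opposite directions. Writing $m=q^n-\rank(F)=\card{Y_0}$ for the number of orphans, a short computation shows that the bound $\rank(F)\le q^n-\frac{n+\log_q2}{d+\log_q2}$ is equivalent to $2^{m-1}\ge q^{n-md}$, and the \emph{in particular} part is just the instance $m=1$ of this (which forces $q^{n-d}\le1$, i.e. $d\ge n$). So it suffices to prove $2^{m-1}\ge q^{n-md}$, which I will do by sandwiching $q^{n-md}$ between $m$ and $2^{m-1}$; the elementary inequality $m\le 2^{m-1}$ (valid for $m\ge1$) handles the upper side, so the real work is to show $q^{n-md}\le m$.

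I would phrase the argument through the signed weight $\mu(y)=\card{F^{-1}(y)}-1$ on $Q^V$. It satisfies $\sum_y\mu(y)=0$, its negative part (the orphans) has total mass $m$, its positive part (the excess of collisions) also has total mass $m$, and hence its support $Y_0\cup Y_{\ge2}$ has size at most $2m$. For $U\subseteq V$ and $u\in Q^U$ note that the cylinder weight $\mu([u])=\sum_{y\in[u]}\mu(y)$ equals $\card{F^{-1}([u])}-q^{n-\card U}$, and that $\card{\mu([u])}\le m$ always. Let $k^\ast$ be the least size of a set $U$ carrying a nonzero cylinder weight.

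The first estimate is a divisibility bound. Assuming $k^\ast\le\lfloor n/d\rfloor$, Lemma~\ref{lem:localrigidity} gives that $\card{F^{-1}([u])}$ is a multiple of $q^{n-k^\ast d}$ for a witnessing pattern $u$; since $q^{n-k^\ast}$ is also such a multiple (here $d\ge1$), so is $\mu([u])$. A nonzero multiple has absolute value at least $q^{n-k^\ast d}$, and combined with $\card{\mu([u])}\le m$ this yields $q^{n-k^\ast d}\le m$. The second estimate is a support lower bound of uncertainty-principle type: a nonzero integer weight on $Q^V$ whose marginals on every set of size $\le k$ vanish must have support at least $2^{k+1}$. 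Applying it at level $k^\ast-1$ (all smaller cylinder weights vanish by minimality of $k^\ast$) gives $2^{k^\ast}\le\card{\mathrm{supp}\,\mu}\le 2m$, hence $k^\ast\le 1+\log_2m\le m$. Putting the two together, $q^{n-md}\le q^{n-k^\ast d}\le m$, as wanted.

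It remains to dispose of the case $k^\ast>\lfloor n/d\rfloor$, where the divisibility step does not apply: then all marginals of order $\le\lfloor n/d\rfloor$ vanish, so the support bound forces $2m\ge 2^{\lfloor n/d\rfloor+1}$, hence $m\ge 2^{\lfloor n/d\rfloor}>n/d$; this makes $md>n$, so $q^{n-md}<1\le 2^{m-1}$ and the target holds trivially (the degenerate regimes $d=0$ or $d\ge n$ are immediate as well). I expect the main obstacle to be the support lower bound $2^{k+1}$: I would prove it by induction on $k$, slicing along a coordinate $i$ on which $\mu$ occupies at least two hyperplanes $x_i=a$ (such $i$ exists because the total mass vanishes and $\mu\ne0$), observing that each nonzero slice again has all marginals of order $\le k-1$ equal to zero, and summing the inductive bounds over the (at least two) nonzero slices. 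Care is also needed to keep the case split on $k^\ast$ versus $\lfloor n/d\rfloor$ clean and to check the boundary arithmetic (e.g. $k^\ast d=n$).
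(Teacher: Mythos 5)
Your proof is correct, but it reaches the key intermediate inequality by a genuinely different route than the paper's. Both arguments rest on the same divisibility tool (Lemma~\ref{lem:localrigidity}), the same case split against $\ipart{n/d}$, and the same final arithmetic via $m\le 2^{m-1}$ where $m=\card{Y_0}$; the difference is in how one exhibits a low--co-dimension cylinder whose preimage count is provably unbalanced. The paper works with the unsigned sets $Y_0$ and $Y_{\ge 2}$ and uses the witnessing Lemma~\ref{l:witness} to find, around a fixed orphan, a cylinder of co-dimension at most $\card{Y_{\ge2}}\le m$ avoiding $Y_{\ge2}$, so that its preimage is strictly deficient and divisibility forces at least $q^{n-\card{Y_{\ge2}}d}$ orphans inside it. You instead work with the signed excess $\mu(y)=\card{F^{-1}(y)}-1$ and prove an uncertainty-principle lemma (a nonzero integer-valued function whose marginals of order at most $k$ all vanish has support at least $2^{k+1}$), which together with $\card{\mathrm{supp}\,\mu}\le 2m$ bounds the minimal order $k^\ast$ of a nonvanishing cylinder weight by $1+\log_2 m\le m$; divisibility then applies to that cylinder. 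I checked the support lemma --- the slicing induction is sound, since if every coordinate met only one occupied hyperplane the support would be a singleton of nonzero total mass --- as well as the equivalence $m\ge\frac{n+\log_q 2}{d+\log_q 2}\iff 2^{m-1}\ge q^{n-md}$ and the boundary cases; everything goes through under the shared implicit convention $d\ge 1$ (for $d=0$ the statement degenerates and the paper's $\ipart{n/d}$ is likewise undefined). Two remarks on the trade-off: your support lemma is a heavier, though tight and reusable, tool than the paper's three-line witnessing lemma; on the other hand, because $k^\ast\le 1+\log_2 m$ is much smaller than the paper's co-dimension bound $\card{Y_{\ge2}}\le m$, your argument actually yields the exponentially stronger conclusions $m\ge 2^{(n-d)/(d+\log_q 2)}$ in the main case and $m\ge 2^{\ipart{n/d}}$ in the other, which you then deliberately weaken to match the stated bound --- it would be worth keeping the stronger form.
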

In particular, this theorem implies that the family of dynamics depicted in Figure~\ref{f:btrees} is impossible to realize with bounded-degree ANs. However, Theorem~\ref{thm:almostnonbij} fails among bijective ANs of fixed degree, such as the dynamics depicted in Figure~\ref{f:btrees}, as we will see in Section~\ref{s:fsr}.

To prove the theorem, we need a simple witnessing lemma.
\begin{lemma}\label{l:witness}
  If $Y\subset Q^V$ with $\card Y\le n$, and $x\in Q^V\setminus Y$, then there exists $U\subset V$ with $\card U\le\card Y$ and $[x_U]\cap Y=\emptyset$.
\end{lemma}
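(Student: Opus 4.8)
The plan is to construct the witnessing coordinate set $U$ greedily by recording, for each configuration in $Y$, a single coordinate on which it disagrees with $x$. Since $x\notin Y$, every $y\in Y$ satisfies $y\neq x$, so there is at least one coordinate $i\in V$ with $x_i\neq y_i$; I would fix one such coordinate $i_y$ for each $y\in Y$ (making an arbitrary choice when several exist).

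Next I would set $U=\{i_y : y\in Y\}$. The cardinality bound is immediate: the map $y\mapsto i_y$ is a surjection from $Y$ onto $U$, so $\card U\le\card Y\le n$, which also ensures $U\subseteq V$ as required. It remains to verify the disjointness $[x_U]\cap Y=\emptyset$. For this, take any $y\in Y$ and any $z\in[x_U]$; by definition of the cylinder we have $z_{i_y}=x_{i_y}$, and by the choice of $i_y$ we have $x_{i_y}\neq y_{i_y}$, hence $z_{i_y}\neq y_{i_y}$ and in particular $z\neq y$. As this holds for every $y\in Y$, no element of $[x_U]$ lies in $Y$, which is exactly the claim.

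There is no real obstacle here: the statement is a simple separation (hitting-set) property, and the only thing to be slightly careful about is that the chosen coordinates may collide across different $y$, which only helps, since it can only shrink $U$ below $\card Y$. I would also note in passing the extremal case $\card Y=n$, where $U$ could equal all of $V$; even then $[x_V]=\{x\}$ and $x\notin Y$, so the conclusion still holds, confirming the bound $\card U\le\card Y$ cannot be improved in general.
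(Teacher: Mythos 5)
Your proof is correct and is essentially the paper's argument: the paper builds $U$ by induction on $\card Y$, adding for each new $y\in Y$ one coordinate where $x$ and $y$ disagree, which is exactly your greedy choice of $i_y$ phrased inductively rather than all at once. Nothing is missing.
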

\begin{proof}
  Let us prove the statement by induction on $Y$.
  If $Y=\emptyset$, the trivial cylinder with $U=\emptyset$ is suitable.
  Now, let $Y$ and $x$ be such that there exists $U\subset V$ with $\card U\le\card Y$ and $[x_U]\cap Y=\emptyset$.
  Let us prove the statement for $Y\cup\{y\}$, where $y\in Q^V\setminus(Y\cup\{x\})$.
  Since $x\ne y$, there exists $i\in V$ such that $x_i\ne y_i$.
  Note that $\card{U\cup\{i\}}\le\card U+1\le\card Y+1$. 
  Besides, $[x_{U\cup\{i\}}]\subset[x_U]$; by induction hypothesis, it does not intersect $Y$.
  Moreover, $[x_{U\cup\{i\}}]\subset[x_{\{i\}}]\not\ni y$.
  It results that $[x_{U\cup\{i\}}]\cap(Y\sqcup\{y\})=\emptyset$.
\qed\end{proof}

Here is now the key lemma for lower-bounding the orphans.
\begin{lemma}\label{lem:balancebij}
  Let $F\in\funs(n,q)$.
  If $\card{Y_0}\ge1$ and $\card{Y_{\ge2}}\le\ipart{n/d}$, then $\card{Y_0}\ge q^{n-\card{Y_{\ge2}}d}$.
\end{lemma}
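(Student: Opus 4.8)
The plan is to start from a single orphan $x\in Y_0$ (which exists since $\card{Y_0}\ge1$), carve out around it a small cylinder that avoids all the collision targets in $Y_{\ge2}$, and then apply the divisibility constraint of Lemma~\ref{lem:localrigidity} inside that cylinder. Since $x\in Y_0$ has no preimage, it does not lie in $Y_{\ge2}$, and the hypothesis $\card{Y_{\ge2}}\le\ipart{n/d}\le n$ lets me invoke Lemma~\ref{l:witness} with $Y=Y_{\ge2}$: this produces a set $U\subset V$ with $\card U\le\card{Y_{\ge2}}$ and $[x_U]\cap Y_{\ge2}=\emptyset$.

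The point of this cylinder is that every configuration in $[x_U]$ has either $0$ or $1$ preimage under $F$, since none lies in $Y_{\ge2}$. Writing $a=\card{[x_U]\cap Y_0}$ for the orphans and $b=\card{[x_U]\cap Y_1}$ for the single-preimage configurations inside the cylinder, I get $a+b=\card{[x_U]}=q^{n-\card U}$ and, by counting preimages weighted by their multiplicity, $\card{F^{-1}([x_U])}=b$. Now $\card U\le\card{Y_{\ge2}}\le\ipart{n/d}$, so Lemma~\ref{lem:localrigidity} applies and tells me that $b=\card{F^{-1}([x_U])}$ is a multiple of $q^{n-\card Ud}$. As $d\ge1$, the whole-cylinder count $q^{n-\card U}=\card{[x_U]}$ is also a multiple of $q^{n-\card Ud}$, hence so is their difference $a=q^{n-\card U}-b$.

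It remains to observe that $a\ge1$, because $x$ itself is an orphan lying in $[x_U]$. A positive multiple of $q^{n-\card Ud}$ is at least $q^{n-\card Ud}$, so $a\ge q^{n-\card Ud}$. Finally, since $\card U\le\card{Y_{\ge2}}$ and $d\ge1$, I have $q^{n-\card Ud}\ge q^{n-\card{Y_{\ge2}}d}$, which yields $\card{Y_0}\ge a\ge q^{n-\card{Y_{\ge2}}d}$ as required.

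The only genuinely delicate points are bookkeeping rather than ideas: checking that the single cardinality bound $\card{Y_{\ge2}}\le\ipart{n/d}$ simultaneously validates the hypotheses of both Lemma~\ref{l:witness} (needing $\card{Y_{\ge2}}\le n$) and Lemma~\ref{lem:localrigidity} (needing $\card U\le\ipart{n/d}$), and keeping the divisibility step honest, namely that because $d\ge1$ both $q^{n-\card U}$ and $b$ are multiples of $q^{n-\card Ud}$, so that $a$ inherits the bound and the final inequality only sharpens when $\card U$ is replaced by the larger $\card{Y_{\ge2}}$ in the exponent.
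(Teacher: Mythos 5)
Your proof is correct and follows essentially the same route as the paper's: an orphan $x$, the witnessing cylinder $[x_U]$ avoiding $Y_{\ge2}$ from Lemma~\ref{l:witness}, and the divisibility of $\card{F^{-1}([x_U])}$ from Lemma~\ref{lem:localrigidity}. The only cosmetic difference is that you conclude by noting $\card{[x_U]\cap Y_0}$ is itself a positive multiple of $q^{n-\card{U}d}$, whereas the paper bounds the multiplier $\alpha$ of $\card{F^{-1}([x_U])}$ and subtracts; the two computations are equivalent.
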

\begin{proof}
  Let $x\in Y_0$.
  Since $x\notin Y_{\ge2}$, Lemma~\ref{l:witness} gives some $U\subset Q^V$ such that $\card U\le\card{Y_{\ge2}}\le\ipart{n/d}$ and $[x_u]\cap Y_{\ge2}=\emptyset$.
  One can write $\card{F^{-1}([x_U])}$ as $\card{F^{-1}([x_U]\cap Y_0)}+\card{F^{-1}([x_U]\setminus Y_0)}$.
  The first term is $0$, by definition of $Y_0$, and the second is $\card{[x_U]\setminus Y_0}$, by nonintersection with $Y_{\ge2}$.
  Since $x\in[x_U]\cap Y_0$, $\card{F^{-1}([x_U])}=\card{[x_U]\setminus Y_0}<q^{n-\card U}$.
  On the other hand, Lemma~\ref{lem:localrigidity} allows to write $\card{F^{-1}([x_U])}$  as $\alpha q^{n-\card Ud}$, for some $\alpha\in\N$.
  Since $\alpha q^{n-\card Ud}<q^{n-\card U}$, we get that $\alpha\le q^{\card U(d-1)}-1$.
  Putting things together, $\card{[x_U]\setminus Y_0}=\card{F^{-1}([x_U])}\le(q^{\card U(d-1)}-1)q^{n-\card Ud}=\card{[x_U]}-q^{n-\card Ud}$.
    We get that $\card{Y_0}\ge\card{[x_U]}-\card{[x_U]\setminus Y_0}\ge q^{n-\card Ud}\ge q^{n-\card{Y_{\ge2}}d}$.
\qed\end{proof}

\begin{proof}[of Theorem~\ref{thm:almostnonbij}]
If $\card{Y_{\ge2}}>\ipart{n/d}$, then $\card{Y_0}\ge\card{Y_{\ge2}}>\ipart{n/d}$ and we are done.
  Otherwise, Lemma~\ref{lem:balancebij} gives that $\card{Y_0}\ge q^{n-\card{Y_{\ge2}}d}\ge q^{n-\card{Y_0}d}$.
  Hence, $\log_q\card{Y_0}\ge n-\card{Y_0}d$.
  The map $x\mapsto(x-1)\log_q2$ upper-bounds $\log_q$ in the whole set $\R_+^*\setminus]1,2[\supset\N$. 
  So we get that $(\card{Y_0}-1)\log_q2\ge\log_q\card{Y_0}\ge n-\card{Y_0}d$.
  Hence, $\card{Y_0}\ge\frac{n+\log_q2}{d+\log_q2}$.
  Whenever $d<n$, this fraction is strictly more than $1$, so that the integer $\card{Y_0}$ has to be at least $2$.
  \qed
\end{proof}
Our bound on $\card{Y_0}$ could be slightly improved by involving better upper bounds on $\log_q$, such as $x\mapsto(\log_q(m+1)-\log_qm)(x-m)+\log_qm$, or $x\mapsto(x-1)^{1/m}\log_q2$, with $m\in\N^*$.
Nevertheless, the resulting expressions become less easy to manipulate.

Here is another application of Lemma~\ref{lem:localrigidity}. 

\begin{proposition}
  Let ${F\in\funs(n,q,d)}$ such that $F$ is not constant.
  Then the number of preimages of any configuration is upper-bounded by ${q^n - q^{n-d}}$.
\end{proposition}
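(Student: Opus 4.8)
The plan is to exploit the fact that a non-constant network must possess at least one non-constant local function, and that this single local function already forbids an entire cylinder of configurations from being preimages of any prescribed target. This avoids invoking Lemma~\ref{lem:localrigidity} altogether; a direct counting suffices.

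First I would observe that if every local function $f_j$ were constant, then $F$ itself would be constant. Since $F$ is not constant, there exists a node $i\in V$ such that $f_i$ is not constant. Writing $Y=\inNeighbors(i)$, we have $\card Y\le d$, and $f_i$ factors through the restriction $x\mapsto x_Y$, taking at least two distinct values over $Q^{Y}$.

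Next, fix an arbitrary target configuration $y\in Q^V$ and consider its $i$-th coordinate $y_i$. Because $f_i$ attains at least two values, there is a pattern $v\in Q^{Y}$ with $f_i(v)\neq y_i$: if $y_i$ is not attained by $f_i$ at all, any $v$ works, and otherwise one simply picks a $v$ realizing a different value. Then every configuration $x\in[v]$ satisfies $F(x)_i=f_i(v)\neq y_i$, so no element of the cylinder $[v]$ is a preimage of $y$, whence $F^{-1}(y)\subseteq Q^V\setminus[v]$.

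Finally I would conclude by counting: the first remark following the definition of communication graph gives $\card{[v]}=q^{\,n-\card Y}$, and since $\card Y\le d$ we have $q^{\,n-\card Y}\ge q^{\,n-d}$. Therefore $\card{F^{-1}(y)}\le q^n-\card{[v]}=q^n-q^{\,n-\card Y}\le q^n-q^{\,n-d}$, as required. The argument is essentially immediate; the only point meriting a word of care is the uniform selection of the disagreeing pattern $v$, which must simultaneously handle the degenerate case where $y_i$ lies outside the image of $f_i$ (yielding zero preimages, trivially within the bound) and the generic case where $f_i$ takes some value distinct from $y_i$.
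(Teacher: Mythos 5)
Your proof is correct, and it takes a route that is recognizably parallel to the paper's but differs in one substantive way: the paper frames this proposition as ``another application of Lemma~\ref{lem:localrigidity}'', whereas you bypass that lemma entirely. The paper works on the codomain side: it picks an image $z\in F(Q^V)$ with $z_i\neq y_i$, notes that $F^{-1}([z_i])$ is nonempty, and then invokes the divisibility statement of Lemma~\ref{lem:localrigidity} to conclude that this preimage set, being a nonzero multiple of $q^{n-d}$, has size at least $q^{n-d}$; disjointness from $F^{-1}(y)$ finishes the argument. You instead work on the domain side: you extract a non-constant local function $f_i$, choose a single input pattern $v\in Q^{\inNeighbors(i)}$ with $f_i(v)\neq y_i$, and observe directly that the cylinder $[v]$, of size $q^{n-\card{\inNeighbors(i)}}\ge q^{n-d}$, avoids $F^{-1}(y)$. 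Your cylinder $[v]$ is in fact one of the cylinders whose union makes up the paper's set $F^{-1}([z_i])$, so the two arguments exhibit essentially the same obstruction, but yours is self-contained and slightly more elementary (it needs only the basic cylinder-size remark, not the divisibility lemma), at the cost of not illustrating the reusable tool the paper is advertising. Your handling of the degenerate case where $y_i$ is not in the range of $f_i$ is careful and correct, though it could be compressed: since $f_i$ takes at least two values on $Q^{\inNeighbors(i)}$, at least one of them differs from $y_i$, full stop.
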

\begin{proof}
Let $y \in Q^V$.
Let us prove that $|F^{-1}(y)|\leq q^n - q^{n-d}$.
Since $F$ is not constant, there exists $z\in F(Q^V)$ such that $z_i \neq y_i$ for some $i \in V$. Since $F^{-1}([z_i]) \neq \emptyset$, by Lemma~\ref{lem:localrigidity}, $| F^{-1}([z_i]) | \geq q^{n-d}$. Furthermore, since $F^{-1}([z_i]) \cap F^{-1}(y) = \emptyset$, we conclude $|F^{-1}(y)| \leq q^n - q^{n-d}$.\qed
\end{proof}

\begin{remark}
  It is tight because we can have $F(x) = (0,\ldots,0)$ if $x_i\neq 0$ for some $1\leq i\leq d$ and $F(x)=(1,0,\ldots,0)$ otherwise.
\end{remark}

\section{Realization results}\label{s:fsr}

\subsection{Feedback shift registers}
  In this section, we are interested in realizing examples of AN with \emph{almost degree $1$}\ie~whose all but one nodes have degree at most $1$.

  We use the following important tool.
  Let $g:Q^n\to Q$, and $F_g:Q^n\to Q^n$ be the corresponding \emph{feedback shift register} ({FSR}), that is, $F_g(x)=F_g(x_1,\ldots,x_{n})=(x_2,\dots,x_{n},g(x))$.
  $G(F_g)$ is thus obtained from the path $1\leftarrow 2\leftarrow\cdots\leftarrow n$ by adding an arc from $i$ to $n$ whenever $g$ depends on input $i$: it has almost degree $1$.
  
The \emph{de Bruijn graph} of order $n$ over the alphabet $Q$ has vertex set ${V=Q^n}$ and arc set ${E=\{(au,ub):a,b\in Q,u\in Q^{n-1}\}}$.

\begin{proposition}[\cite{Lempel71}]\label{p:debruijn}
  For any $n$ and any ${1\leq k\leq q^n}$, the de Bruijn graph of order $n$ admits a cycle of length $k$.
\end{proposition}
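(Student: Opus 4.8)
The plan is to translate the condition ``cycle of length $k$'' into a statement about $q$-ary sequences and then split the range of $k$ into an easy regime and the genuinely hard one. A directed simple cycle $v_0\to v_1\to\cdots\to v_{k-1}\to v_0$ in the de Bruijn graph of order $n$ forces each $v_{i+1}$ to be the left shift of $v_i$ with one appended symbol; hence the whole cycle is encoded by a single cyclic word $s=(s_0,\dots,s_{k-1})$ (indices mod $k$) with $v_i=(s_i,s_{i+1},\dots,s_{i+n-1})$, and the requirement that the $v_i$ be pairwise distinct is exactly the requirement that the $k$ cyclic windows of $s$ of length $n$ be pairwise distinct. So I would reduce the statement to: for every $1\le k\le q^n$ there is a cyclic word of length $k$ over $Q$ whose $k$ cyclic length-$n$ windows are all distinct.

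For $1\le k\le n$ this is immediate: take the word $0^{k-1}1$, whose period is exactly $k$; since each window of length $n\ge k$ contains a full period, two windows coincide only under a shift by a multiple of $k$, so all $k$ of them are distinct. For the opposite extreme $k=q^n$ the required word is precisely a de Bruijn sequence of order $n$, whose existence is the classical Hamiltonicity of the de Bruijn graph. The remaining, and only delicate, range is $n\le k<q^n$.

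For this range I would argue by induction on $n$, using the identification of the order-$n$ de Bruijn graph with the line digraph of the order-$(n-1)$ one: an arc $au\to ub$ of the smaller graph is named by the length-$n$ word $aub$, and two such arcs are composable exactly when the corresponding length-$n$ words are adjacent in the larger graph. Under this identification a simple cycle of length $k$ in the order-$n$ graph corresponds to a \emph{closed trail} (an arc-simple closed walk, in which vertices may repeat) of length $k$ in the order-$(n-1)$ graph. Thus it suffices to produce closed trails of every length $1\le k\le q^n$ in the order-$(n-1)$ de Bruijn graph, which is Eulerian (every vertex has in- and out-degree $q$) and strongly connected. The idea is to grow a closed trail in a Hierholzer-like fashion, starting from a short closed trail and repeatedly splicing in arc-disjoint cycles at already-visited vertices, exploiting the fact that trails, unlike simple cycles, are allowed to revisit vertices.

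The main obstacle is precisely the fine control of the \emph{exact} length in this last step: splicing an arc-disjoint cycle at a shared vertex increases the trail length by the length of that cycle, not necessarily by one, so hitting every integer in the interval up to $q^n$ requires choosing the spliced cycles—including the length-$1$ loops at the constant vertices and suitably short cycles supplied by the induction hypothesis—so that their sizes realize every target. Managing this bookkeeping, that is, ensuring at each stage that an arc-disjoint cycle of the needed size is available and attaches to the current trail, is where the real work lies; the two easy regimes above and the line-digraph reduction are the routine scaffolding around it.
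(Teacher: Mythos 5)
The paper does not actually prove this proposition: it is imported as a black box from Lempel's 1971 paper on $m$-ary closed sequences, so the only honest comparison is between your sketch and what a complete proof would require. Your scaffolding is sound and is indeed the classical route: the translation of length-$k$ cycles into cyclic words with $k$ pairwise distinct length-$n$ windows, the word $0^{k-1}1$ for $k\le n$, the de Bruijn sequence for $k=q^n$, and the identification of the order-$n$ graph with the line digraph of the order-$(n-1)$ graph, under which simple cycles upstairs correspond exactly to closed trails downstairs. All of that is correct.

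However, the step you defer --- producing a closed trail of \emph{every} length $1\le k\le q^n$ in the order-$(n-1)$ graph by Hierholzer-style splicing --- is not bookkeeping; it is the entire theorem, and the properties you invoke (Eulerian, strongly connected) are provably insufficient to carry it. A directed triangle is Eulerian and strongly connected yet has no closed trail of length $1$ or $2$, so any argument that only splices arc-disjoint cycles into a growing trail must use specific structure of de Bruijn graphs (e.g.\ the $q$ loops at constant vertices, the inductive supply of short cycles \emph{through prescribed vertices}, or a cross-join operation turning a trail of length $N$ into one of length $N+1$) to guarantee that a cycle of the needed residual size both exists, is arc-disjoint from the current trail, and meets it. Your induction hypothesis gives simple cycles of every length up to $q^{n-1}$ somewhere in the graph, but says nothing about where they sit relative to the trail built so far, and the genuinely hard range $q^{n-1}<k<q^n$ forces vertex revisits that the hypothesis does not control. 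Until that attachment-and-exact-length argument is supplied, the proof has a genuine gap precisely where you located it.
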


\begin{proposition}
  For any $n$ and any ${1 \leq k\leq q^n}$, there exists ${F:Q^n\to Q^n}$ with almost degree $1$ and whose maximum limit cycle has length $k$.
\end{proposition}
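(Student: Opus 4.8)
The plan is to exploit the tight correspondence between feedback shift registers and out-degree-one subgraphs of the de Bruijn graph. The key observation is that for an FSR, the image $F_g(x)=(x_2,\dots,x_n,g(x))$ is always a de Bruijn out-neighbour of $x=(x_1,\dots,x_n)$: it is obtained by dropping the first symbol and appending $g(x)$. Conversely, choosing for every vertex $x$ exactly one out-arc, say towards $x_2\cdots x_n b$, determines the value $g(x)=b$ and hence an FSR. Thus the dynamics graph $\dyna(F_g)$ is precisely the spanning subgraph of the de Bruijn graph of order $n$ in which each vertex keeps a single chosen out-arc, and every such subgraph arises from a unique FSR of almost degree $1$. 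In particular, the limit cycles of $F_g$ are exactly the cycles of this functional subgraph.

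First I would fix, using Proposition~\ref{p:debruijn}, a cycle $C$ of length $k$ in the de Bruijn graph, and keep all arcs of $C$; this pins down $g$ on the $k$ vertices of $C$ so as to follow $C$. It then remains to choose the out-arc of every vertex outside $C$ so that the resulting functional subgraph has $C$ as its \emph{only} cycle: in that case $C$ is the unique limit cycle of $F_g$ and the maximum limit-cycle length is exactly $k$.

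To route the remaining vertices, I would use that the de Bruijn graph is strongly connected, with diameter at most $n$: from any vertex $x$ one reaches any target $y$ by successively shifting in the symbols $y_1,\dots,y_n$. Hence every vertex $v\notin C$ has a finite distance $\mathrm{dist}(v,C)$ to $C$, and at least one out-arc of $v$ reaches a vertex at distance $\mathrm{dist}(v,C)-1$. For each $v\notin C$ I select such a distance-decreasing arc, which completes the definition of $g:Q^n\to Q$ and hence of $F_g$.

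The only point requiring care is to check that no limit cycle longer than $k$ is created, i.e.\ that the functional subgraph contains no cycle other than $C$. This is handled at once by the distance-decreasing choice: along the selected arcs the quantity $\mathrm{dist}(\cdot,C)$ strictly decreases until $C$ is reached, so any cycle of the functional subgraph would have to keep $\mathrm{dist}(\cdot,C)$ constant and therefore lie entirely inside $C$. Consequently $C$ is the unique limit cycle, $F_g$ has almost degree $1$, and its maximum limit cycle has length $k$, as required.
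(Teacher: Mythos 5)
Your proposal is correct and follows essentially the same route as the paper: both realize the length-$k$ de Bruijn cycle from Proposition~\ref{p:debruijn} as the dynamics of a feedback shift register of almost degree $1$. The only difference is the treatment of off-cycle vertices --- the paper simply sets $g=0$ there, so the complement of $C$ drains towards the (at worst) extra fixed point $0^n$, whereas you route each such vertex along a shortest path to $C$, which is slightly more work but yields $C$ as the \emph{unique} limit cycle.
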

\begin{proof}
  Consider some cycle $C$ of length $k$ in the de Bruijn graph of order $n$ over $Q$, given by Proposition~\ref{p:debruijn}, and the feedback shift register $F_g$, where
  \[{g(x) =
      \begin{cases}
        b&\text{ if $x=au$ and $au\to ub\in C$};\\
        0&\text{ otherwise.}
      \end{cases}
    }\]
  $F_g$ has almost degree $1$, and has the cycle $C$ in its dynamics.
  To conclude the proof, it is sufficient to observe that the dynamics on the complement of $C$ consists in adding $0$ at node $n$ and shifting node $i+1$ to node $i$ for ${i<n}$.
  Therefore, the only possible cycle created by this part of the dynamics is possibly the fixed point ${(0,\ldots, 0)}$.\qed
\end{proof}

\subsection{Construction of near-Hamiltonian dynamics with in-degree $2$}

We say that $F: Q^V \to Q^V$ is near-Hamiltonian if it has one fixed point and a cycle of length $q^n - 1$. In this section, we let $q$ be a prime power and $Q = \GF(q)$ be the finite field of order $q$. We can then construct a near-Hamiltonian AN $F: \GF(q)^n \to \GF(q)^n$ with an interaction graph of maximum in-degree $2$.

\begin{theorem}\label{theo:near-hamiltonian}
    For any prime power $q$ and any $n \ge 2$, there exists a near-Hamiltonian AN in $\funs( n, q, 2 )$.
\end{theorem}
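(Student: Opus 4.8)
The plan is to realize near-Hamiltonian dynamics as multiplication by a primitive element of the field $\GF(q^n)$, regarded as an $n$-dimensional $\GF(q)$-vector space, and to choose coordinates so that this linear map has in-degree at most $2$. First I would invoke the classical fact that $\GF(q^n)^*$ is cyclic of order $q^n-1$, pick a generator (primitive element) $\alpha$, and consider its minimal polynomial $p(X)=X^n-c_{n-1}X^{n-1}-\cdots-c_0$ over $\GF(q)$, which is irreducible of degree $n$ with $c_0\ne 0$ (since $\alpha\ne 0$). I then identify $\GF(q)^n$ with $\GF(q^n)$ through the polynomial basis $1,\alpha,\dots,\alpha^{n-1}$, so that the configuration $(x_1,\dots,x_n)$ encodes the field element $\sum_{i}x_i\alpha^{i-1}$, and I define $F$ to be multiplication by $\alpha$, which is $\GF(q)$-linear.

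The key computation is to write $F$ in these coordinates. Using the relation $\alpha^n=\sum_{i}c_{i-1}\alpha^{i-1}$, multiplying $\sum_i x_i\alpha^{i-1}$ by $\alpha$ and collecting coefficients yields $F(x)_1=c_0 x_n$ and $F(x)_j=x_{j-1}+c_{j-1}x_n$ for $2\le j\le n$. This is exactly the companion (Galois-type) form of the feedback: node $1$ reads only $x_n$, and each node $j\ge 2$ reads only $x_{j-1}$ and $x_n$. Hence every node has in-degree at most $2$, so $F\in\funs(n,q,2)$.

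It remains to read off the dynamics. Since $F$ is multiplication by $\alpha$, we have $F(0)=0$, so the all-zero configuration is a fixed point; and on the nonzero elements $F$ acts as multiplication by an element of multiplicative order $q^n-1$, whose orbit through any nonzero element exhausts $\GF(q^n)^*$ in a single cycle of length $q^n-1$. Thus $F$ is near-Hamiltonian, completing the argument, modulo the standard existence of a primitive polynomial of degree $n$ over $\GF(q)$ for every prime power $q$ and every $n\ge 1$.

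The main point to get right is the choice of representation rather than any hard estimate. The naive ``Fibonacci'' shift register realizing the same maximal-length cycle is an FSR in the sense of the previous subsection and therefore has \emph{almost degree $1$} but a single node of in-degree up to $n$; the whole gain comes from using the polynomial basis, which spreads the feedback along the subdiagonal and produces the Galois form with uniform in-degree $2$. I expect the only subtlety to be confirming both that this basis yields the stated companion formulas and that a primitive element indeed induces one full cycle of length $q^n-1$, both of which are classical in finite-field and LFSR theory.
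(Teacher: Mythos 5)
Your proposal is correct and follows essentially the same route as the paper: identify $\GF(q)^n$ with $\GF(q^n)$ via the polynomial basis of a primitive element $\alpha$, take $F$ to be multiplication by $\alpha$, and observe that in these coordinates the local functions have the companion form $f_j(x)=x_{j-1}-p_j x_{n-1}$, hence in-degree at most $2$, while the dynamics is the fixed point $0$ plus one cycle of length $q^n-1$. The computation of the local rules and the appeal to cyclicity of $\GF(q^n)^*$ match the paper's argument exactly.
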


\begin{proof}
Let $\GF(q^n)$ be generated by the primitive polynomial $P(\xi)  = \sum_{i=0}^{n-1} p_i \xi^i$ and let $\alpha$ be a root of $P(\xi)$, i.e. a primitive element of $\GF(q^n)$. We then identify $\GF(q^n)$ and $\GF(q)^n$ as follows:
\[
    x = (x_0, x_1, \dots, x_{n-1}) \in \GF(q)^n \sim \beta = x_0 + x_1 \alpha + \dots + x_{n-1} \alpha^{n-1} \in \GF( q^n ).
\]
Then 
\[
    F:x\mapsto\alpha x
\]
is near-Hamiltonian: $0^n$ is a fixed point, and since $\GF(q^n)^*$ is a cyclic group generated by $\alpha$, we have the cycle $1 \mapsto \alpha \mapsto \dots \mapsto \alpha^{q^n - 2} \mapsto \alpha^{q^n - 1} = 1$.

For any $\beta \in \GF(q^m)$, we have
\begin{align*}
    \alpha \beta &= \alpha \sum_{j=0}^{m-1} x_j \alpha^j \\
    &= \sum_{j=0}^{m-2} x_j \alpha^{j+1} + x_{m-1} \alpha^m \\
    &= \sum_{i=0}^{m-1} x_{i-1} \alpha^i +  x_{m-1} \sum_{i=0}^{m-1} (-p_i) \alpha^i \\
    &= \sum_{i=0}^{m-1} ( x_{i-1} - p_i x_{m-1}  ) \alpha^i.
\end{align*}
The local functions are then given by
\[
    f_i(x) = x_{i-1} - p_i x_{m-1},
\]
(with $x_{-1} = 0$), hence $F$ has degree $2$.
  \qed
\end{proof}

\begin{example}
Let $q = 2$, $n = 3$, $P( \xi ) = \xi^3 + \xi + 1$. Then $\alpha^3 = \alpha + 1$, and 
\[
    \GF(2^3) = \{ 0, 1, \alpha, \alpha^2, \alpha^3 = \alpha + 1, \alpha^4 = \alpha^2 + \alpha, \alpha^5 = \alpha^2 + \alpha + 1, \alpha^6 = \alpha^2 + 1 \}
\]
(and $\alpha^7 =1$). We identify $\GF(2)^3$ and $\GF(2^3)$ as follows:
\begin{align*}
    000 & \sim 0 \\
    100 & \sim 1 \\
    010 & \sim \alpha \\
    001 & \sim \alpha^2 \\
    110 & \sim \alpha + 1 = \alpha^3 \\
    011 & \sim \alpha^2 + \alpha = \alpha^4 \\
    111 & \sim \alpha^2 + \alpha + 1 = \alpha^5 \\
    101 & \sim \alpha^2 + 1 = \alpha^6.
\end{align*}
Then $F$ is given as follows: 
\[
    0 \mapsto 0, \quad 1 \mapsto \alpha \mapsto \alpha^2 \mapsto \alpha^3 \mapsto \alpha^4 \mapsto \alpha^5 \mapsto \alpha^6 \mapsto 1, 
\]
or from a Boolean network point of view:
\[
    000 \mapsto 000, \quad 100 \mapsto 010 \mapsto 001 \mapsto 110 \mapsto 011 \mapsto 111 \mapsto 101 \mapsto 100. 
\]
In terms of local functions, we have
\begin{align*}
    f_0(x) &= x_2 \\
    f_1(x) &= x_0 + x_2 \\
    f_2(x) &= x_1.
\end{align*}
\end{example}

\subsection{Rank $q^n - 2$ with $d = 2n/3$}

We have shown in Theorem \ref{thm:almostnonbij} that rank $q^n - 1$ required degree $n$. For rank $q^n - 2$, however, the bound only yields $d \ge n/2 - 1$; we now prove that we can achieve $d =  \lceil \frac{2}{3}n  \rceil$.

\begin{theorem}\label{theo:highrank}
For all $n \ge 3$ and all odd $q \ge 3$, there exists an AN in $\funs(n, q, d = \lceil \frac{2}{3}n  \rceil )$ with rank $q^n - 2$.
\end{theorem}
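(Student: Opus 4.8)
The plan is to reinterpret the target combinatorially. Writing $\card{Y_0}$ for the number of orphans, the identity $\card{Y_0}=\sum_{k\ge2}(k-1)\card{Y_k}$ from the preliminaries shows that $\rank(F)=q^n-2$ is exactly the condition $\card{Y_0}=2$, i.e.\ $F$ is injective except for either two disjoint collisions or a single fibre of size $3$. This is an extremely sharp defect, and the whole difficulty is already visible through the philosophy of Lemma~\ref{lem:localrigidity}: a bounded-degree coordinate cannot isolate one fibre, so naive constructions tend to let a collision \emph{replicate} over the coordinates nobody reads, producing a number of orphans that is a multiple of some $q^{\,\ge 1}$ rather than exactly $2$. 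The construction therefore has to force the two colliding configurations to be genuinely isolated using only local reads that each touch at most $\lceil 2n/3\rceil$ nodes.

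First I would fix the architecture. Split $V$ into three blocks $A,B,C$ of sizes as balanced as possible, so that every pairwise union $A\cup B$, $B\cup C$, $C\cup A$ has size at most $\lceil 2n/3\rceil$, and impose the cyclic dependency pattern in which $f_A$ reads $A\cup B$, $f_B$ reads $B\cup C$, and $f_C$ reads $C\cup A$. Whatever block maps are chosen, the resulting $F$ then lies in $\funs(n,q,\lceil 2n/3\rceil)$. The key structural feature is that, although no single output sees all of $V$, \emph{any two of the three outputs jointly read all of $V$}. As background I would take the block-circulant linear map, thinking of the blocks as vectors over $\Z/q\Z$ (for now assume $3\mid n$ and equal blocks),
\[
  F_0:(x_A,x_B,x_C)\longmapsto(x_A+x_B,\;x_B+x_C,\;x_C+x_A).
\]
This $F_0$ is a bijection precisely because its matrix has determinant $2^{\,n/3}$, which is a unit modulo $q$ exactly when $q$ is odd; this is where the hypothesis on $q$ first enters.

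Next I would graft onto $F_0$ localized nonlinear corrections that create the collisions. The guiding computation is that, for a candidate colliding pair $p,p'$ with $p'=p-\delta$, the two linear coordinates force $\delta$ to lie in a one-parameter family $\delta=\delta(s)$, after which the perturbed coordinate yields an equation of the shape ``$\varepsilon(\cdot)-\varepsilon(\cdot-\delta)=-2s$''; with $\varepsilon\equiv0$ the odd-$q$ relation $2s=0$ forces $s=0$, recovering bijectivity. Switching on $\varepsilon$ at a controlled location produces a \emph{unique} shift $s=-e/2$ (again using that $2$ is invertible), hence a collision. The crucial point — and the remedy for replication — is that a correction placed in a single output leaves the block it does not read free, so its collisions replicate $q^{\,\ge n/3}$ times; one must instead place corrections in \emph{two} of the three outputs, whose reads cover all of $V$, so that a surviving collision is forced to hit the perturbations in enough coordinates to pin down every coordinate of $p$ and $p'$ at once. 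Tuning the two corrections so that exactly one (or two) such pinned solution survives is what delivers $\card{Y_0}=2$.

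The main obstacle is the global bookkeeping: proving that the grafted map has \emph{exactly} two orphans and no others. Concretely I expect to have to verify three things: that away from the perturbation sites the odd-$q$ cancellation keeps $F$ injective; that the coupled perturbation equations admit only the intended finitely many nonzero-$\delta$ solutions rather than a whole free coordinate's worth; and that these intended solutions assemble into either two disjoint $2$-collisions or one $3$-collision. Certifying this non-interference — that tuning two local functions to merge the designed configurations does not accidentally merge others once everything is composed through the cyclic coupling — is the real work, and is the step I expect to be hardest; by contrast the degree bound and the bijective background are essentially free. Finally, for $n$ not divisible by $3$ the unequal block sizes break the clean block-circulant symmetry, so the background bijection has to be assembled from size-respecting bijections together with linear couplings, a routine but slightly fiddly adjustment that does not affect the mechanism above.
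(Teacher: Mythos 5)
The skeleton you set up is in fact the paper's: the published proof first constructs the case $n=3$, $d=2$ over $\mathbb{Z}_q$ with local functions in which $f_1$ reads $\{1,3\}$, $f_2$ reads $\{1,2\}$ and $f_3$ reads $\{2,3\}$ --- exactly your cyclic pairwise-reading pattern --- and then obtains $n=3\ell$ by identifying $(\mathbb{Z}_q)^\ell$ with $\mathbb{Z}_{q^\ell}$, i.e.\ by treating your three blocks as three super-symbols of an odd-size alphabet; the cases $n=3\ell+r$ are handled by adding $r$ frozen control nodes (which makes your ``fiddly adjustment'' clean and gives degree $2\ell+r=\lceil 2n/3\rceil$ exactly). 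Your reformulation $\rank(F)=q^n-2\iff\card{Y_0}=2$, the replication danger, and the role of oddness of $q$ (invertibility of $2$) all match the paper. So the architecture is correct.

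The gap is that you stop precisely where the theorem begins. You acknowledge that ``certifying this non-interference \ldots{} is the real work,'' but you never write down the perturbation, never exhibit the perturbed map, and never carry out the verification that the defect is exactly $2$ rather than a multiple of some $q^{\ge1}$. That verification is the entire content of the result: the paper's solution is a delicately hand-tuned triple of local functions (a parity-type function in $f_1$, a shift with two exceptional values in $f_2$, a conditional increment in $f_3$) followed by an exhaustive four-case analysis of all candidate colliding pairs, which locates the two collisions $F(0,0,q-1)=F(0,1,0)$ and $F(1,0,q-1)=F(1,q-1,0)$ and rules out every other one. Note also that the paper's perturbation is distributed over \emph{all three} outputs, not two as you propose; it is the interplay of all three reads that pins down both colliding configurations in every coordinate, and it is not evident that a two-output correction on top of your block-circulant background can avoid either replication or accidental extra collisions. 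Without a concrete candidate and the collision analysis, the assertion that ``tuning the two corrections \ldots{} delivers $\card{Y_0}=2$'' is a plan, not a proof.
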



\begin{proof}
First, we consider the case $n = 3$ and $d = 2$. Let $q \ge 3$ be odd and $Q = \mathbb{Z}_q$. The local functions of $F$ are given as follows.

\begin{align*}
    f_1( x ) &= \begin{cases}
        x_1 & \text{if } x_1 \ge 2 \\
        (x_1 + x_3) \mod 2 & \text{if } x_1 \in \{ 0, 1 \},
    \end{cases} \\
    f_2( x ) &= \begin{cases}
        x_2 &\text{if } x_1 \ge 2 \\
        (x_1 + x_2) \mod q &\text{if } x_1 \in \{ 0, 1 \}, x_2 \ne 0 \\
        1 &\text{if } x_1 x_2 = 00 \\
        0 &\text{if } x_1 x_2 = 10,
    \end{cases} \\
    f_3( x ) &=\begin{cases}
        x_3 &\text{if } x_2 \ne 0 \\
        ( x_3 + 1 ) \mod q &\text{if } x_2 = 0.
    \end{cases}
\end{align*}

For instance, for $q=3$ we obtain:
\begin{center}
  \begin{minipage}{.3\linewidth}
    \begin{tabular}{c|c}
      $x$ & $F(x)$ \\
      \hline
      $000$ &  $011$ \\
      $001$ &  $112$ \\
      $002$ &  $010$ \\
      $010$ &  $010$ \\
      $011$ &  $111$ \\
      $012$ &  $012$ \\
      $020$ &  $020$ \\
      $021$ &  $121$ \\
      $022$ &  $022$ \\
    \end{tabular}
  \end{minipage}
  \begin{minipage}{.3\linewidth}
  \begin{tabular}{c|c}
    $x$ & $F(x)$ \\
    \hline
    $100$ &  $101$ \\
    $101$ &  $002$ \\
    $102$ &  $100$ \\
    $110$ &  $120$ \\
    $111$ &  $021$ \\
    $112$ &  $122$ \\
    $120$ &  $100$ \\
    $121$ &  $001$ \\
    $122$ &  $102$ \\
\end{tabular}
\end{minipage}
  \begin{minipage}{.3\linewidth}
\begin{tabular}{c|c}
    $x$ & $F(x)$ \\
    \hline
    $200$ &  $201$ \\
    $201$ &  $202$ \\
    $202$ &  $200$ \\
    $210$ &  $210$ \\
    $211$ &  $211$ \\
    $212$ &  $212$ \\
    $220$ &  $220$ \\
    $221$ &  $221$ \\
    $222$ &  $222$ 
\end{tabular}
\end{minipage}
\end{center}
We now search for collisions. One can easily check the following two collisions:
\begin{align}
    \label{equation:first_collision}
    F(0, 0, q-1) &= F(0, 1, 0), \\
    \label{equation:second_collision}
    F(1, 0, q-1) &= F(1, q-1, 0).
\end{align}
We now prove that those are the only collisions. Suppose that $a = a_1a_2a_3$ and $b = b_1b_2b_3$ are two distinct configurations, say $q^2 a_1 + q a_2 + a_3 < q^2 b_1 + q b_2 + b_3$, such that $F(a) = F(b)$. We proceed by a case analysis.

\begin{enumerate}
    \item $b_1 \ge 2$. \\
    Then $f_1( a ) = f_1( b ) \ge 2$, hence $a_1 = f_1( a ) = f_1( b ) = b_1 \ge 2$. Moreover, $f_2( 
a ) = a_2 = f_2( b ) = b_2$. Thus $a_2 = b_2$ and $a_3 \ne b_3$, which yields $f_3( a ) \ne f_3( b )$, which is the desired contradiction.

    \item $a_1 = 0$, $b_1 = 1$. \\
    Since $f_2( a ) = f_2( b )$, we obtain $a_2 \in \{ 2, \dots, q-1 \}$ and $b_2 = a_2 - 1 \in \{1, \dots, q-2 \}$. Since $f_3( a ) = f_3( b )$ and $a_2, b_2 \ne 0$, we obtain $a_3 = b_3$. But then $f_1( a ) = a_3 \mod 2 \ne (b_3 + 1) \mod 2 = f_3( b )$, which is the desired contradiction.

    \item $a_1 = b_1 = 0$. \\
    First, we have $a_2 \ne b_2$, since otherwise $a_2 = b_2$ and $a_3 \ne b_3$ thus $f_3( a ) \ne f_3( b )$. Now, since $f_2( a ) = f_2( b )$, we obtain $a_2 = 0$ and $b_2 = 1$. Then $(a_3 + 1) \mod q = f_3( a ) = f_3( b ) = b_3$ and $a_3 \mod 2 = f_1( a ) = f_1( b ) = b_3 \mod 2$; those two constraints are both satisfied only if $a_3 = q-1$ and $b_3 = 0$. Therefore $a$ and $b$ are the collision in \eqref{equation:first_collision}. 

    \item $a_1 = b_1 = 1$. \\
    Again, we have $a_2 \ne b_2$, hence $a_2 = 0$ and $b_2 = q-1$. By the same reasoning as above, we obtain $a_3 = q-1$ and $b_3 = 0$. Therefore $a$ and $b$ are the collision in \eqref{equation:second_collision}.
\end{enumerate}

Having proved the case $n = 3$, we now move on to the case where $n = 3\ell$ for some $\ell \ge 1$. Let $q$ be odd and let $k = q^\ell$ be odd as well. Consider $F \in \funs(k, 3, 2)$ as described above. By identifying $\mathbb{Z}_k$ with $( \mathbb{Z}_q )^\ell$, we obtain a network $\tilde{F} \in \funs(q, n = 3\ell, d = 2\ell)$ of rank $k^3 - 2 = q^n - 2$.

We now deal with the other case, say $n = 3\ell + r$ for some $r \in \{1, 2\}$; write $V = L \cup R$ with $L = \{1, \dots, 3\ell\}$ and $R = \{3\ell + 1, \dots, n\}$. Let $\tilde{F} \in\funs(q, 3\ell, 2\ell)$ of rank $q^{3\ell} - 2$ as above. Then let $\hat{F} \in\funs(q, n = 3\ell + r, d = 2\ell + r )$ choose between the identity function on $L$ or $\tilde{F}$, depending on the control bits in $R$:
\begin{align*}
    \hat{F}( x )_L &= \begin{cases}
        x_L & \text{if } x_R \ne 0 \\
        \tilde{F}( x_L ) & \text{if } x_R = 0
    \end{cases} \\
    \hat{F}( x )_R &= x_R.
\end{align*}
Then $\hat{F}(x) = \hat{F}(y)$ for some $x \ne y$ if and only if $x_R = y_R = 0$ and $x_L$ and $y_L$ collide: $\tilde{F}( x_L ) = \tilde{F}( y_L )$. Thus there are only two collisions.
  \qed
\end{proof}

\section{Further negative results for the Boolean case}
\label{sec:booleancase}

In this section, we consider the case of $q=2$ and $d=2$.
This is an interesting case, due to the important role taken by affine functions. 
A function $f : Q^n \to Q$ is \emph{affine} if $Q = \GF(q)$ and $f( x ) = a \cdot x + b$ for some vector $a \in \GF(q)^n$ and some scalar $b \in \GF(q)$. 
An AN is affine if all its local functions $f_i$ are affine, i.e. if $Q = \GF(q)$ is a finite field and $F(x) = Ax + v$ for some matrix $A \in \GF( q )^{n \times n}$ and some vector $v \in \GF( q )^n$.

For $q = 2$, $\GF(2)$ is a finite field and in this case $+$ denotes the addition modulo $2$ (and we use the same notation for sum of configuration modulo $2$ component-wise).
We begin by a folklore lemma on balanced Boolean functions of at most two variables, whose proof we give for the sake of completeness.

\begin{lemma} \label{lemma:balanced_q=2_d=2}
All the balanced functions $f: \{0,1\}^2 \to \{0,1\}$ (that is, functions such that $|f^{-1}(0)|=|f^{-1}(1)|$) are affine.
\end{lemma}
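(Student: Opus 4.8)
The plan is to count the balanced functions $f:\{0,1\}^2\to\{0,1\}$ and the affine functions on two variables, and check that these two counts coincide; combined with an inclusion, this yields the result. A function $f:\{0,1\}^2\to\{0,1\}$ is specified by its values on the four inputs $00,01,10,11$, and it is balanced precisely when exactly two of these four values equal $1$. Hence there are $\binom{4}{2}=6$ balanced functions. On the other side, an affine function is of the form $f(x)=a_1x_1+a_2x_2+b$ with $a_1,a_2,b\in\GF(2)$, giving at most $2^3=8$ functions; but the two constant functions ($a_1=a_2=0$) are not balanced, leaving exactly $6$ nonconstant affine functions, all of which are balanced.

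First I would verify directly that every nonconstant affine function is balanced: if $a=(a_1,a_2)\ne 0$, then the map $x\mapsto a\cdot x$ is a surjective group homomorphism $\GF(2)^2\to\GF(2)$, so each fiber has size $2$, and translating by the constant $b$ preserves fiber sizes; thus $f$ is balanced. This shows the $6$ nonconstant affine functions form a subset of the balanced functions. Since I have already counted exactly $6$ balanced functions in total, this inclusion of a $6$-element set into a $6$-element set must be an equality, so every balanced function is (nonconstant) affine.

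Alternatively, and perhaps more transparently for the reader, I would argue by a short case analysis rather than pure counting: enumerate the six balanced functions by which pair of inputs is sent to $1$, and exhibit the affine representation of each. For instance, the function with $f^{-1}(1)=\{10,11\}$ is $f(x)=x_1$; the one with $f^{-1}(1)=\{01,11\}$ is $f(x)=x_2$; the one with $f^{-1}(1)=\{01,10\}$ is $f(x)=x_1+x_2$; and the remaining three are the complements $x_1+1$, $x_2+1$, $x_1+x_2+1$, corresponding to $f^{-1}(1)=\{00,01\}$, $\{00,10\}$, $\{00,11\}$ respectively. This makes the correspondence fully explicit and leaves no residual computation.

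I do not anticipate a genuine obstacle here, since the statement is a finite verification over a four-point domain. The only point requiring a modicum of care is ensuring the enumeration is exhaustive and matches up correctly: one must confirm that the six $\binom{4}{2}$ balanced functions are in bijection with the six nonconstant affine functions without double-counting or omission. The counting argument sidesteps even this by reducing everything to the equality $6=6$ together with the one-line proof that nonconstant affine functions are balanced, which is the cleanest route; I would lead with that and optionally append the explicit table for concreteness.
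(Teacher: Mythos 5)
Your proposal is correct and follows essentially the same route as the paper: count that there are $\binom{4}{2}=6$ balanced functions, observe that the $6$ nonconstant affine functions are all balanced, and conclude by equality of cardinalities. The explicit table you offer as an alternative is a fine addition but not needed; the only implicit point in both arguments is that the $6$ nonconstant affine functions are pairwise distinct, which is immediate since $(a_1,a_2,b)$ is recovered from the values of $f$ at $00$, $10$, $01$.
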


\begin{proof}
The map $f \mapsto f^{-1}( 0 )$ is a bijection between the set of balanced functions $f : \{0,1\}^2 \to \{0,1\}$ to the pairs of elements in $\{0,1\}^2$. 
As such, there are $\binom{4}{2} = 6$ balanced functions $f : \{0,1\}^2 \to \{0,1\}$. 
Conversely, the $6$ affine functions of the form $f(x_1, x_2) = a_1 x_1 + a_2 x_2 + a_0$ for $(a_1, a_2) \ne (0,0)$  are clearly balanced.
\end{proof}

In particular, any permutation in $\funs(n,q=2,d=2)$ has balanced local functions, and hence must be affine.

\begin{corollary} \label{corollary:affine}
Any permutation in $\funs(n,q=2,d=2)$ must be affine.
\end{corollary}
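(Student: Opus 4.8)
The plan is to reduce the statement to Lemma~\ref{lemma:balanced_q=2_d=2} by showing that, when $F$ is a permutation, every local function $f_j$ is balanced on the (at most two) variables it actually depends on. Once this is established, the lemma immediately yields affinity of each $f_j$ over $\GF(2)$, and hence of $F$, since a network all of whose local functions are affine is itself affine.

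First I would observe that bijectivity forces each coordinate of $F$ to be balanced as a function of the whole configuration. Indeed, since $F$ is a bijection of $\B^V$, we have $\card{f_j^{-1}(0)} = \card{\{x\in\B^V : F(x)_j = 0\}} = \card{F^{-1}(\{y : y_j = 0\})} = \card{\{y : y_j = 0\}} = 2^{n-1}$, using that $F$ preserves cardinalities of preimages; symmetrically $\card{f_j^{-1}(1)} = 2^{n-1}$.

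Next I would translate this global balance into balance over the essential variables. Let $S = \inNeighbors(j)$, so $\card S \le 2$ because $d = 2$, and $f_j$ factors through $\B^S$. If $f_j$ takes value $0$ on $k$ of the $2^{\card S}$ patterns over $S$, then $\card{f_j^{-1}(0)} = k\,2^{n-\card S}$, which combined with the previous step gives $k = 2^{\card S - 1}$; that is, $f_j$ is balanced as a function on $\B^S$.

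Finally I would conclude by a short case analysis on $\card S \in \{0,1,2\}$: the case $\card S = 0$ cannot occur, since a constant cannot satisfy $k = 2^{-1}$; the case $\card S = 1$ forces $f_j$ to be either its single input or its negation, both affine; and the case $\card S = 2$ is exactly Lemma~\ref{lemma:balanced_q=2_d=2}. The only point that requires any care — and the single step I would write out fully — is the counting argument that carries the balance of the $j$-th coordinate of $F$ down to balance of $f_j$ on its at most two essential inputs; everything after that is a direct appeal to the lemma together with the two degenerate cases.
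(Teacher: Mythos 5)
Your proposal is correct and takes essentially the same route as the paper: the paper's proof is the single observation that a permutation in $\funs(n,q=2,d=2)$ has balanced local functions and hence is affine by Lemma~\ref{lemma:balanced_q=2_d=2}, which is exactly your argument with the counting step (carrying balance of the $j$-th coordinate of $F$ down to balance of $f_j$ on its at most two essential inputs) and the degenerate cases $\card{S}\le 1$ spelled out explicitly.
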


This algebraic restriction leads to strong dynamical restrictions, as seen below.

\subsection{Non-existence of Boolean Hamiltonian with degree $2$}

We call an AN $F : Q^V \to Q^V$ Hamiltonian if its dynamics consists of a single cycle of length $q^n$. We prove that $\funs(n,2,2)$ does not contain any Hamiltonian AN (for $n \ge 3$).

\begin{theorem} \label{theorem:no_affine_hamiltonian}
If $F$ is an affine AN over $\GF(q)^n$ with $n \ge 3$, then it is not Hamiltonian.
\end{theorem}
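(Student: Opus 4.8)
The plan is to reduce Hamiltonicity to a statement about the multiplicative order of the affine permutation and then show this order is forced to be too large for an $n\times n$ matrix. Write $F(x)=Ax+v$ with $A\in\GF(q)^{n\times n}$; since a Hamiltonian $F$ is in particular a bijection, $A$ must be invertible. A straightforward induction gives $F^k(x)=A^kx+S_kv$ where $S_k=\sum_{i=0}^{k-1}A^i$. Let $r=\mathrm{ord}(A)$ and let $p$ be the characteristic of $\GF(q)$, so $q=p^s$. Using $A^r=I$ one checks that $S_{r\ell}=\ell\,S_r$, so that $F^{r\ell}$ is the translation $x\mapsto x+\ell S_r v$. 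As the additive group $(\GF(q)^n,+)$ has exponent $p$, such a translation is the identity exactly when $\ell S_r v=0$; I would conclude that $\mathrm{ord}(F)=r$ when $S_r v=0$ and $\mathrm{ord}(F)=rp$ otherwise.

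Next I would exploit that a single $q^n$-cycle is a permutation of order exactly $N:=q^n=p^{ns}$. In both cases above $\mathrm{ord}(F)$ is a power of $p$, hence so is $r=\mathrm{ord}(A)$. The key algebraic input is that in characteristic $p$ one has $x^{p^a}-1=(x-1)^{p^a}$; thus $A^{p^a}=I$ forces the minimal polynomial of $A$ to divide $(x-1)^{p^a}$, so $A$ is unipotent: $A=I+M$ with $M$ nilpotent. Since $M$ is $n\times n$, its nilpotency index $m$ satisfies $m\le n$, and $(I+M)^{p^b}=I+M^{p^b}$ gives $\mathrm{ord}(A)=p^{\lceil\log_p m\rceil}\le p^{\lceil\log_p n\rceil}$.

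Finally I would split on the two cases and derive a numerical contradiction for $n\ge 3$. If $S_r v=0$, then $p^{ns}=\mathrm{ord}(F)=\mathrm{ord}(A)\le p^{\lceil\log_p n\rceil}$, i.e. $ns\le\lceil\log_p n\rceil$, which is absurd because $ns\ge n>\lceil\log_2 n\rceil\ge\lceil\log_p n\rceil$. The delicate case is $S_r v\ne 0$, where $\mathrm{ord}(F)=rp=p^{ns}$ forces $\mathrm{ord}(A)=p^{ns-1}$; here the coarse bound above is too weak, as it fails to rule out $(n,q)=(3,2)$. I expect this to be the main obstacle, and I would resolve it with the sharper identity $\sum_{i=0}^{p^a-1}(1+x)^i=x^{p^a-1}$ in $\GF(q)[x]$, which yields $S_r=M^{p^a-1}$. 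Then $S_r v\ne 0$ forces $M^{p^a-1}\ne 0$, so the nilpotency index satisfies $m\ge p^a=p^{ns-1}$; combined with $m\le n$ this gives $p^{ns-1}\le n$, contradicting $p^{ns-1}\ge 2^{n-1}>n$ for $n\ge 3$. In either case $F$ cannot be Hamiltonian, and by Corollary~\ref{corollary:affine} this also settles the Boolean degree-$2$ situation.
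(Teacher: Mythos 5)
Your proof is correct, and it follows the same backbone as the paper's argument: both exploit that a Hamiltonian affine map has order $q^n$, use the freshman's dream in characteristic $p$ to deduce that $A-I$ is nilpotent, and then bound the nilpotency index by $n$ to conclude that the order of $F$ is too small. The genuine difference is in how the translation part is handled. The paper argues coarsely: from $B^n=0$ and $n\le q^{n-2}$ it gets $A^{q^{n-2}}=I$, hence $F^{q^{n-1}}=\mathrm{id}$, and this inequality fails precisely at $(n,q)=(3,2)$, which the paper settles by computer search. You instead compute $\mathrm{ord}(F)$ exactly as $r$ or $rp$ according to whether $S_rv=0$, and in the delicate case evaluate $S_r=\sum_{i=0}^{p^a-1}(I+M)^i=M^{p^a-1}$ via the identity $x\sum_{i=0}^{p^a-1}(1+x)^i=(1+x)^{p^a}-1=x^{p^a}$ in the integral domain $\GF(q)[x]$; this forces the nilpotency index to be at least $p^{ns-1}\ge 2^{n-1}>n$ for $n\ge 3$, a contradiction. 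This sharper bookkeeping buys a uniform, computer-free proof covering $(3,2)$, at the cost of a slightly longer case analysis; I verified the order computation ($\ell S_rv=0$ iff $p\mid\ell$ when $S_rv\ne 0$) and the numerical estimates, and they all hold.
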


\begin{proof}
Computer search settles the case where $n=3$ and $q=2$. We now assume $(n,q) \ne (3,2)$, which is equivalent to $n \le q^{n-2}$.

Let $F$ be affine, i.e. $F(x) = Ax + v$ for some matrix $A \in \GF( q )^{n \times n}$ and some vector $v \in \GF( q )^n$. For the sake of contradiction, suppose that $F$ is Hamiltonian. Denoting $k = q^n$, we have
\[
    F^k(x) = A^k x + ( A^{k-1} + A^{k-2} + \dots + A + I )v = x,
\]
hence $A^k = I$.

Let $B = A - I$. Since $A$ and $-I$ commute, we have $B^k = A^k + (-I)^k$ \cite[Theorem 1.46]{LN97} and hence $B^k = A^k + (-1)^k I = A^k - I = 0$. Thus $B$ is nilpotent and by simple linear algebra, $B^n = 0$. Since $n \le q^{n-2}$, we have $B^{ q^{n-2} } = 0$, and hence $A^{ q^{n-2} } = I$.

Thus $F^{ q^{n-2} }( x ) = x + u$ for some vector $u$ and
\[
    F^{ q^{n-1} }( x ) = x + qu = x,
\]
which contradicts the fact that $F$ is Hamiltonian.
\qed
\end{proof}

\begin{corollary}
Let $q=2$ and $n \ge 3$. If $F$ is Hamiltonian, then $F$ has degree at least $3$.
\end{corollary}

\begin{proof}
Suppose $F$ is Hamiltonian of degree $2$. By Corollary \ref{corollary:affine} $F$ is affine, which contradicts Theorem \ref{theorem:no_affine_hamiltonian}.
\qed
\end{proof}

\subsection{Upper bound on the rank}

We can significantly refine the bound in Theorem \ref{thm:almostnonbij} for the case $q = 2$, $d = 2$.

\begin{proposition}
Suppose $F \in \funs(n,2,2)$ with $\rank( F ) < 2^n$. Then $\rank( F ) \le 2^n - 2^{ n-2 }$.
\end{proposition}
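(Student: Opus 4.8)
The plan is to prove the equivalent statement $\card{Y_0}\ge 2^{n-2}$, under the hypothesis $\rank(F)<2^n$ (so that $\card{Y_0}\ge 1$). I would work with the preimage-count function $P(y)=\card{F^{-1}(y)}$ and, for each coordinate $i\in V$ and each symbol $a\in\{0,1\}$, with the quantity $\card{f_i^{-1}(a)}=\card{F^{-1}([u])}=\sum_{y_i=a}P(y)$, where $u\in\{0,1\}^{\{i\}}$ is the pattern fixing coordinate $i$ to $a$. Applying Lemma~\ref{lem:localrigidity} with $U=\{i\}$ (legitimate since $1\le\ipart{n/2}$ for $n\ge 2$) shows that each $\card{f_i^{-1}(a)}$ is a multiple of $2^{n-2}$, while of course $\card{f_i^{-1}(0)}+\card{f_i^{-1}(1)}=2^n=4\cdot 2^{n-2}$.

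I would then split according to whether some local function is far from balanced. If there is a coordinate $i$ and a symbol $a$ with $\card{f_i^{-1}(a)}\le 2^{n-2}$, then the half-space $\{y:y_i=a\}$ contains $2^{n-1}$ configurations but is hit by at most $\card{f_i^{-1}(a)}\le 2^{n-2}$ preimages, so that at least $2^{n-1}-2^{n-2}=2^{n-2}$ of these configurations are orphans; this gives $\card{Y_0}\ge 2^{n-2}$ immediately. Since $\card{f_i^{-1}(0)}$ and $\card{f_i^{-1}(1)}$ are multiples of $2^{n-2}$ summing to $4\cdot 2^{n-2}$, the only way coordinate $i$ can avoid this case is to have $\card{f_i^{-1}(0)}=\card{f_i^{-1}(1)}=2^{n-1}$, i.e. $f_i$ balanced.

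It remains to treat the balanced case, in which every $f_i$ is balanced. This is where the degree bound pays off: each $f_i$ depends on at most two coordinates, so by Lemma~\ref{lemma:balanced_q=2_d=2} it is affine (a balanced function of a single variable is $x_k$ or $x_k+1$, and constant functions are not balanced). Hence $F$ is an affine map $x\mapsto Ax+v$ over $\GF(2)$. Its image is the coset $v+\mathrm{Im}(A)$, of size $2^{\rank A}$, so $\rank(F)=2^{\rank A}$ and $\card{Y_0}=2^n-2^{\rank A}$. Since $\rank(F)<2^n$ forces $A$ to be singular, we get $\rank A\le n-1$ and therefore $\card{Y_0}\ge 2^n-2^{n-1}=2^{n-1}\ge 2^{n-2}$.

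The only delicate point is the balanced case, and the main obstacle is recognizing that balance alone forces affinity here; once that is seen, the singular-affine count is immediate. The divisibility supplied by Lemma~\ref{lem:localrigidity} does the rest, as it is exactly what reduces the whole problem to the clean dichotomy "some local function is far from balanced, or all local functions are exactly balanced".
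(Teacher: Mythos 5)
Your proof is correct and follows essentially the same route as the paper's: the same dichotomy between "all local functions balanced, hence $F$ affine with singular matrix, hence $\rank(F)\le 2^{n-1}$" and "some $f_i$ unbalanced, hence at least $2^{n-2}$ orphans". The only cosmetic difference is that in the unbalanced case you count orphans in the half-space $\{y:y_i=a\}$ for the under-represented value $a$, whereas the paper bounds $|F(\{0,1\}^V)|$ from above by splitting the domain along $f_i^{-1}(b)$ for the over-represented value $b$; these are the same computation seen from complementary sides.
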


\begin{proof}    
  Suppose that $F$ is a non-bijective AN in $\funs(n,2,2)$.
  First, if all its local functions are balanced, then $F$ is affine, hence $\rank(F) \le 2^{n-1} < 2^n - 2^{n-2}$.
  Second, if the local function $f_i$ is not balanced, then there exists $b \in \{0,1\}$ such that $|f_i^{-1}(b)| \ge 3$.
  Let $A = f_i^{-1}(b)$ and $B = \{ x : x_i = b \}$.
  Denoting $\bar{A} = \{0,1\}^V \setminus A$, we obtain
\[
    | f( \{0,1\}^V ) | \le | f( A ) | + | f( \bar{A} ) | \le | B | + | \bar{A} | \le 2^{n-1} + 2^{n-2} = 2^n - 2^{n-2}.
\]
\qed
\end{proof}

\begin{remark}
This bound is also tight. Indeed, let $F \in \funs(n,2,2)$ be defined by $f_1( x ) = x_1 \land x_2$ and $f_i( x ) = x_i$ otherwise. Then $F( \{0,1\}^V ) = \{ x \in \{0,1\}^V : x_1 x_2 \ne 10 \}$ so that $\rank( F ) = 2^n - 2^{n-2}$.
\end{remark}

\subsection{Hamiltonian dynamics on centralized interaction graphs}\label{sec:centralized}

To be more concise in the following, we denote by $e_i$ the configuration equal to $1$ at node $i$ and $0$ elsewhere.

Aracena and Zapata conjecture in \cite{zapata22} that there is a constant $\alpha>0$ such that, for every $n\in\N$, if $F\in\funs(n,2)$ is Hamiltonian, the degree of $F$ is at least $\alpha n$.
In this section, we prove (a strong variant of) the conjecture under the assumption that $G(F)$ is {\em centralized}, that is, $G(F)$ has a node whose deletion makes the graph acyclic.
In the following, we abusively say that $F$ is centralized when $G(F)$ is. Note that FSRs are centralized.
So we will prove that there is no centralized Hamiltonian function in $\funs(n,2,d)$ when $d<n$.
We actually prove something stronger. 
\begin{theorem}\label{thm:even}
Let $F\in \funs(n,2,d)$ be a centralized bijection. If $n\geq 3$ and $d<n$, then $F$ has an even number of limit cycles. 
\end{theorem}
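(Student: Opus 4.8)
The plan is to translate the statement about limit cycles into a statement about the signature of $F$ as a permutation of $\B^V$. Since $F$ is a bijection, $\dyna(F)$ is a disjoint union of its $c$ limit cycles, which are exactly the cycles of the permutation $F$ and which partition the $2^n$ configurations. The signature then satisfies $\mathrm{sgn}(F)=(-1)^{2^n-c}$; as $n\ge 3$ makes $2^n$ even, this reduces to $\mathrm{sgn}(F)=(-1)^{c}$. Hence $c$ is even if and only if $F$ is an even permutation, and I would spend the rest of the proof showing that a centralized bijection of degree $d<n$ is even. (Equivalently, since an even $2^n$ forces the number of odd-length cycles to be even, one may instead show that $F$ has an even number of even-length cycles.)

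Next I would fix a normal form coming from centralization. Relabel so that the center is node $n$; since $G(F)-n$ is acyclic it admits a topological order, and after relabelling $\{1,\dots,n-1\}$ accordingly, each local function $f_i$ with $i<n$ depends only on $\{x_j:j<i\}\cup\{x_n\}$, and in particular has no self-dependence, as a self-loop would be a cycle of $G(F)-n$. The bound $d<n$ enters only at the center: $f_n$ omits at least one variable $x_m$. This normal form already forces a rigid, shift-register-like behaviour; for instance bijectivity makes $F(x)_1=x_n+\varepsilon_1$ a copy (possibly negated) of the center, so that along any orbit coordinate $1$ at time $t+1$ equals the center at time $t$. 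Feedback shift registers are the prototypical centralized maps, and the goal is to reduce the signature of a general centralized $F$ to pieces whose sign one can control.

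The heart of the argument is a signature computation by factorization into elementary bijections of two kinds. The first kind are XOR-type updates $x\mapsto x+e_j\,\gamma(x)$ with $\gamma$ not depending on $x_j$; such a map is an involution, and its sign is $(-1)^{|\gamma^{-1}(1)|}$, which equals $+1$ as soon as $\gamma$ omits some variable, because then $\gamma^{-1}(1)$ has even cardinality. This is exactly where $d<n$ is used: the correction attached to the center is governed by a function omitting the missing variable $x_m$, so it contributes sign $+1$. The second kind is a single \emph{skeleton} bijection encoding the information flow along the DAG together with the center feedback; in the feedback-shift-register prototype this skeleton is the cyclic shift $(x_1,\dots,x_n)\mapsto(x_2,\dots,x_n,x_1)$, whose number of cycles is the number of binary necklaces of length $n$, even for $n\ge 3$, so that its sign is $+1$. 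Multiplying the signs of all factors would give $\mathrm{sgn}(F)=+1$.

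The main obstacle is that the synchronous update of an automata network does not factor into bijective single-coordinate updates: each update of a node $i<n$ overwrites $x_i$ by a value independent of $x_i$ and is therefore two-to-one, so the bijectivity of $F$ is a genuinely global phenomenon rather than a product of local ones. Producing the factorization above, and computing the skeleton sign for an arbitrary branching DAG rather than a single shift, is the delicate part. I expect the clean way around this to be an induction on $n$, with the base case $n=3$ settled by inspection as is done elsewhere in the paper: one peels off either coordinate $1$, which merely copies the center, or node $n-1$, which in the topological order feeds only the center, and checks that this operation transfers the parity of the cycle count to a centralized bijection on $\B^{n-1}$ of degree $<n-1$.
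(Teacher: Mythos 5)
Your opening reduction is sound and is in fact the same invariant the paper exploits: since $2^n$ is even for $n\geq 3$, the parity of the number of limit cycles $c$ of the bijection $F$ equals the parity of $2^n-c$, i.e.\ its signature, and each ``swap'' $F\mapsto F\circ(x\leftrightarrow y)$ used in the paper is exactly multiplication by a transposition. Your normal form (center relabelled as $n$, topological order on the acyclic remainder, $f_1(x)=x_n+\varepsilon_1$ by balancedness of local functions of a bijection) is also correct. But the heart of your argument --- factoring $F$ into a ``skeleton'' bijection of sign $+1$ times XOR-type involutions of sign $+1$ --- is never constructed, and you yourself flag the obstruction: updating a coordinate $i$ with $i\notin\inNeighbors(i)$ is two-to-one, so $F$ does not factor through single-coordinate bijections, and it is unclear what the skeleton of an arbitrary branching DAG should be. The fallback induction on $n$ is likewise not carried out and does not obviously work as stated: forgetting coordinate $1$ does not induce a well-defined map on $\B^{n-1}$ when other nodes read $x_1$, and no argument is given that the parity of the cycle count transfers to the smaller network. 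There is also a small error: the sign of the involution $x\mapsto x+e_j\gamma(x)$ is $(-1)^{\card{\gamma^{-1}(1)}/2}$ (it is a product of $\card{\gamma^{-1}(1)}/2$ disjoint transpositions), not $(-1)^{\card{\gamma^{-1}(1)}}$; to conclude the sign is $+1$ you need $\gamma$ to omit \emph{two} coordinates, namely $x_j$ and the missing variable $x_m$, which your setup does provide for the center but which you should state.

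The missing content is supplied in the paper by two ingredients absent from your plan. First, the skeleton is pinned down by proving that $G(F)$ is Hamiltonian for any centralized bijection (Lemma~\ref{lem:Hamiltonian}, resting on Gadouleau's theorem that the interaction graph of a bijection contains a spanning disjoint union of cycles); the skeleton is then the shift $\sigma^C$ along that Hamiltonian cycle, and the fact that $p(\sigma^C)=p(\sigma)=0$ is itself a lemma requiring proof (Lemma~\ref{lem:even_sigma}), not merely the assertion that necklace counts are even. Second, the factorization is obtained by induction on the weight $w(F,C)=\sum_i w_i(F,C)$ rather than on $n$: centrality guarantees a node $i$ with $w_i(F,C)>0$ whose deletion leaves $G(F)$ acyclic, which forces $F(x+e_n)=F(x)+e_1$ and allows a single transposition that decreases the weight by one (Lemma~\ref{lem:swap_to_sigma}); the hypothesis $d<n$ then makes each $w_i(F,C)$ even by exactly your $x\mapsto x+e_k$ pairing argument (Lemma~\ref{lem:even_weight}). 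In short, you have identified the right invariant and the right place where $d<n$ enters, but the decomposition that makes the signature computable is the actual content of the theorem and is not present in your proposal.
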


The main tool is a swap operation on $F$, taken from \cite{F82}, defined (in our setting) as follows. Given distinct $x,y\in\B^n$, let $(x\leftrightarrow y)$ the permutation of $\B^n$ that swaps $x$ and $y$: $(x\leftrightarrow y)(x)=y$, $(x\leftrightarrow y)(y)=x$ and $(x\leftrightarrow y)(z)=z$ for all $z\neq x,y$. Given  $F\in\funs(n,2)$, we say that $F'=F\circ (x\leftrightarrow y)$ is a {\em swap} of $F$. Let $p(F)\in\B$ be the parity of the number of limit cycles in $F$, and suppose that $F$ is a bijection. Then $F'$ is a bijection and the swap operation changes the parity of the number of limit cycles: $p(F')\neq p(F)$. Indeed, let $C_x$ and $C_y$ be the limit cycles of $F$ containing $x$ and $y$, respectively, and let $\ell$ and $\ell'$ be the numbers of limit cycles in $F$ and $F'$, respectively. Clearly every limit cycle of $F$ distinct from $C_x,C_y$ is also a limit cycle of $F'$. Then, we have two cases. First, if $C_x=C_y$, then the swap operation splits this limit cycle into two limit cycles so that $\ell'=\ell+1$; see Figure \ref{fig:swap}(a) for an illustration. Second, if $C_x\neq C_y$, then the swap operation joins the two limit cycles into one limit cycle so that $\ell'=\ell-1$; see Figure \ref{fig:swap}(b) for an illustration. Thus in any case $p(F)\neq p(F')$.  

\begin{figure}
\[\tag{a}
\begin{array}{ccc}
\begin{tikzpicture}[every node/.style={inner sep=.5pt,outer sep=1}]
\node (1) at (0:1){$y$};
\node (2) at (45:1){\scriptsize$\bullet$};
\node (3) at (90:1){\scriptsize$\bullet$};
\node (4) at (135:1){\scriptsize$\bullet$};
\node (5) at (180:1){$x$};
\node (6) at (225:1){\scriptsize$\bullet$};
\node (7) at (270:1){\scriptsize$\bullet$};
\node (8) at (315:1){\scriptsize$\bullet$};
\path[->,thick]
(1) edge (2)
(2) edge (3)
(3) edge (4)
(4) edge (5)
(5) edge (6)
(6) edge (7)
(7) edge (8)
(8) edge (1)
;
\end{tikzpicture}
&\qquad&
\begin{tikzpicture}[every node/.style={inner sep=.5pt,outer sep=1}]
\node (1) at (0:1){$y$};
\node (2) at (45:1){\scriptsize$\bullet$};
\node (3) at (90:1){\scriptsize$\bullet$};
\node (4) at (135:1){\scriptsize$\bullet$};
\node (5) at (180:1){$x$};
\node (6) at (225:1){\scriptsize$\bullet$};
\node (7) at (270:1){\scriptsize$\bullet$};
\node (8) at (315:1){\scriptsize$\bullet$};
\path[->,thick]
(1) edge (6)
(2) edge (3)
(3) edge (4)
(4) edge (5)
(5) edge (2)
(6) edge (7)
(7) edge (8)
(8) edge (1)
;
\end{tikzpicture}
\\[1mm]
F&&F'=F\circ (x\leftrightarrow y)
\end{array}
\]

\[\tag{b}
\begin{array}{ccc}
\begin{tikzpicture}[every node/.style={inner sep=.5pt,outer sep=1}]
\node (1) at (0:1){$y$};
\node (2) at (45:1){\scriptsize$\bullet$};
\node (3) at (90:1){\scriptsize$\bullet$};
\node (4) at (135:1){\scriptsize$\bullet$};
\node (5) at (180:1){$x$};
\node (6) at (225:1){\scriptsize$\bullet$};
\node (7) at (270:1){\scriptsize$\bullet$};
\node (8) at (315:1){\scriptsize$\bullet$};
\path[->,thick]
(1) edge (6)
(2) edge (3)
(3) edge (4)
(4) edge (5)
(5) edge (2)
(6) edge (7)
(7) edge (8)
(8) edge (1)
;
\end{tikzpicture}
&\qquad&
\begin{tikzpicture}[every node/.style={inner sep=.5pt,outer sep=1}]
\node (1) at (0:1){$y$};
\node (2) at (45:1){\scriptsize$\bullet$};
\node (3) at (90:1){\scriptsize$\bullet$};
\node (4) at (135:1){\scriptsize$\bullet$};
\node (5) at (180:1){$x$};
\node (6) at (225:1){\scriptsize$\bullet$};
\node (7) at (270:1){\scriptsize$\bullet$};
\node (8) at (315:1){\scriptsize$\bullet$};
\path[->,thick]
(1) edge (2)
(2) edge (3)
(3) edge (4)
(4) edge (5)
(5) edge (6)
(6) edge (7)
(7) edge (8)
(8) edge (1)
;
\end{tikzpicture}
\\[1mm]
F&&F'=F\circ (x\leftrightarrow y)
\end{array}
\]
{\caption{\label{fig:swap} Illustration of the swap operation.}}
\end{figure}
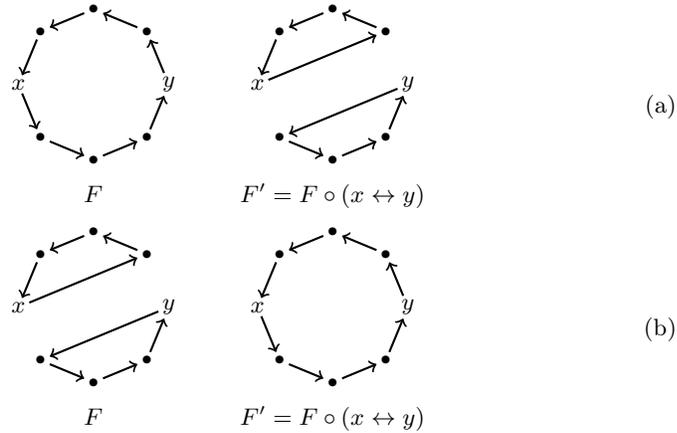

More generally, for $k\geq 1$, we say that $F'$ is a {\em $k$-swap} of $F$ if there exists configurations $x^1,y^1,\dots,x^k,y^k$, with $x^i\neq y^i$ for all $1\leq i\leq n$, such that 
\[
F'=F\circ (x^1\leftrightarrow y^1)\circ\cdots\circ(x^k\leftrightarrow y^k). 
\]
By convention, the $0$-swap of $F$ is $F$ itself. The $k$-swap operation preserves the bijectivity, and since each individual swap changes the parity of the number of limit cycles, we obtain the following lemma. 

\begin{lemma}\label{lem:swap}
Let $F\in\funs(n,2)$ be a bijection, and let $F'$ be a $k$-swap of $F$. Then $p(F)=p(F')$ if and only if $k$ is even. 
\end{lemma}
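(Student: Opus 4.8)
The plan is to prove Lemma~\ref{lem:swap} by a straightforward induction on $k$, leveraging the single-swap parity argument that was already established in the paragraph preceding the $k$-swap definition. The base case $k=0$ is immediate: the $0$-swap of $F$ is $F$ itself, so $p(F)=p(F')$ and $k=0$ is even, matching the claimed equivalence. The inductive structure will carry the entire weight of the argument, so I would set it up carefully.

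For the inductive step, I would write $F'=F\circ(x^1\leftrightarrow y^1)\circ\cdots\circ(x^k\leftrightarrow y^k)$ and decompose it as $F'=F''\circ(x^k\leftrightarrow y^k)$, where $F''=F\circ(x^1\leftrightarrow y^1)\circ\cdots\circ(x^{k-1}\leftrightarrow y^{k-1})$ is a $(k-1)$-swap of $F$. A key preliminary observation, already noted in the excerpt, is that each individual swap preserves bijectivity; hence $F''$ is a bijection whenever $F$ is, and the single-swap parity fact applies to the pair $(F'',F')$. By the established single-swap result, $p(F')\neq p(F'')$. By the induction hypothesis applied to the $(k-1)$-swap $F''$, we have $p(F)=p(F'')$ if and only if $k-1$ is even, equivalently $p(F)\neq p(F'')$ if and only if $k-1$ is odd (i.e. $k$ is even). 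Combining these two facts via the transitivity of parity, $p(F)=p(F')$ holds precisely when $p(F)=p(F'')$ and $p(F'')=p(F')$ have the same truth value fails to occur an even number of times; concretely, since $p(F')\neq p(F'')$ always, $p(F)=p(F')$ holds if and only if $p(F)\neq p(F'')$, which by the induction hypothesis happens if and only if $k-1$ is odd, i.e. if and only if $k$ is even. This closes the induction.

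The only genuine subtlety — and the step I would be most careful about — is the bookkeeping of parities over $\B$, making sure the logical equivalences chain correctly rather than flipping a sign. It is cleanest to phrase the whole thing additively: think of $p$ as a value in $\B$ and observe that each swap adds $1$ modulo $2$, so that $p(F')=p(F)+k \bmod 2$, from which $p(F)=p(F')$ iff $k\equiv 0 \pmod 2$ follows instantly. Framing it this way sidesteps the error-prone manipulation of nested "if and only if" statements and makes the induction essentially a telescoping sum. I do not anticipate any real obstacle here; the lemma is a formal consequence of the single-swap parity flip, and the main care is purely in the presentation of the parity arithmetic. I would also note explicitly that bijectivity is preserved at every intermediate stage, since the single-swap parity result requires each factor to be a bijection for the "split versus join" dichotomy illustrated in Figure~\ref{fig:swap} to apply.
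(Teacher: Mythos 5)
Your proposal is correct and matches the paper's argument: the paper proves the lemma exactly by the observation that each individual swap preserves bijectivity and flips the parity of the number of limit cycles, so $k$ swaps yield $p(F')=p(F)+k \bmod 2$; you merely make the implicit induction on $k$ explicit. The one garbled sentence about truth values ``failing to occur an even number of times'' should be dropped in favour of your cleaner additive formulation, which is all that is needed.
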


In \cite{F82}, Fredricksen gives a survey of Hamiltonian FSRs and, given a bijective FSR $F\in\funs(n,2)$, the swap operation is used to connect $p(F)$ and the {\em weight} of $F$ defined (in our setting) as the number $w(F)$ of configurations $x\in\B^n$ such that $x_n<f_1(x)$. Let $\sigma\in F(n,2)$ be the {\em circular shift}, defined by 
\[
\sigma(x)=(x_n,x_1,\dots,x_{n-1}).
\]
Fredricksen proves that $\sigma$ is a $w(F)$-swap of $F$. He also says, without proof, that $p(\sigma)=0$. From these two properties and Lemma~\ref{lem:swap}, we obtain that $p(F)$ is the parity of $w(F)$. An unmentioned and easy to prove consequence is that if node $1$ has in-degree at most $n-1$ in $G(F)$, then  $w(F)$ is even (this will be generalized in Lemma~\ref{lem:even_weight}) and thus $F$ has an even number of limit cycles: this proves Theorem~\ref{thm:even} (and thus Aracena-Zapata's conjecture) for FSRs. 

In addition to this simple observation, our contribution is an extension of the mentioned results to centralized networks, giving Theorem~\ref{thm:even}. We start by giving a simple proof that $\sigma$ has an even number of limit cycles (when $n\geq 3$), which already used the swap technique.

\begin{lemma}\label{lem:even_sigma}
For all $n\geq 3$, we have $p(\sigma)=0$. 
\end{lemma}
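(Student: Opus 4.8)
The plan is to read off $p(\sigma)$ combinatorially and then exploit the bitwise-complement symmetry. Since $\sigma$ is a bijection, its limit cycles are exactly its orbits on $\B^n$, which are the binary necklaces of length $n$; hence $p(\sigma)$ is the parity of the number $N(n)$ of such necklaces. I would then bring in the complement map $c:x\mapsto x+\mathbf 1$, which commutes with $\sigma$ and is fixed-point-free. Being a product of $2^{n-1}$ disjoint transpositions, $c$ is an even permutation for $n\ge 2$, which is the swap-theoretic reason it cannot affect the parity at stake; concretely, $c$ permutes the $\sigma$-cycles and pairs each non-self-complementary cycle with a distinct one, so that $N(n)\equiv S(n)\pmod 2$, where $S(n)$ counts the \emph{self-complementary} necklaces.

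The odd case then falls immediately. If a cycle $C$ is self-complementary, $c$ restricts to a fixed-point-free involution of $C$ commuting with the cyclic action of $\sigma$ on $C$; the only such involution is $x\mapsto\sigma^{\ell/2}(x)$ with $\ell=\card C$, which forces $\ell$ to be even. Consequently, when $n$ is odd every cycle length divides $n$ and is odd, so $S(n)=0$ and $p(\sigma)=0$ at once.

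For even $n$ the task is to show $S(n)$ is even for $n\ge 4$ (note $S(2)=1$ is exactly the excluded value). I would compute $S(n)$ by Burnside applied to the abelian group $\langle\sigma\rangle\times\langle c\rangle$, the only nonstandard ingredient being the fixed-point counts of the maps $\sigma^t c$. Each such map is the affine $\GF(2)$-map $x\mapsto P^t x+\mathbf 1$, with $P$ the shift matrix, so its fixed points form the solution set of a linear system, of size $2^{\gcd(t,n)}$ when $n/\gcd(t,n)$ is even and $0$ otherwise; summing over $t$ yields
\[
  S(n)=\frac1n\sum_{\substack{h\mid n\\ h\ \mathrm{even}}}\phi(h)\,2^{n/h}.
\]

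The crux, and the step I expect to be most delicate, is extracting the parity of this expression despite the division by $n$. The plan is a $2$-adic valuation count: writing $n=2^a m$ with $m$ odd, the valuation of the term indexed by $h=2^b d$ (with $d\mid m$, $1\le b\le a$) is the integer $b-1+v_2(\phi(d))+2^{a-b}(m/d)$, and one checks that for $a\ge 2$ this is minimized \emph{exactly} at the two top divisors $h=n$ and $h=n/2$, both giving the value $a+v_2(\phi(m))$. Since these two terms are equal they combine to raise the valuation by one, while every other term, having integer valuation strictly above the minimum, is at least one larger; hence $v_2(nS(n))\ge a+1+v_2(\phi(m))$ and $v_2(S(n))\ge 1$, so $S(n)$ is even. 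The case $a=1$, i.e. $n=2m$, has a single minimal term and gives $v_2(S(n))=v_2(\phi(m))$, positive exactly when $m>1$, recovering both the evenness for $n\ge 4$ and the lone exception $n=2$. The main obstacle is thus purely the bookkeeping that isolates the two dominant terms and verifies that no other term descends to the minimal valuation; the rest is routine.
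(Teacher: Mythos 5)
Your argument is correct but follows a genuinely different route from the paper's. The paper never counts necklaces: it perturbs the shift to $\bar\sigma(x)=\sigma(x)+e_1$, observes that $\bar\sigma^n$ is the global negation, so that every cycle length of $\bar\sigma$ equals $2^{\alpha+1}$ times an odd number (where $2^\alpha$ is the largest power of $2$ dividing $n$), deduces from the cycle lengths summing to $2^n$ that $\bar\sigma$ has an even number of cycles, and finally transfers this to $\sigma$ via an even number ($2^{n-1}$) of swaps using Lemma~\ref{lem:swap}; this handles odd and even $n$ uniformly in a few lines. Your route --- reduce $N(n)\bmod 2$ to the number $S(n)$ of self-complementary necklaces via the complement symmetry, dispose of odd $n$ because a self-complementary cycle must have even length, and extract the parity of $S(n)=\frac1n\sum_{h\mid n,\ h\text{ even}}\phi(h)2^{n/h}$ by a $2$-adic valuation analysis --- is sound: the Burnside formula is right, and for $a=v_2(n)\ge2$ the minimal valuation $a+v_2(\phi(m))$ is indeed attained exactly at $h=n$ and $h=n/2$, whose terms are in fact both equal to $2^a\phi(m)$ and hence sum to $2^{a+1}\phi(m)$. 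Be aware, though, that the step you flag as delicate really is the crux and is not pure bookkeeping: for a proper odd divisor $d$ of $m$ the required inequality is not automatic, because $v_2(\phi(d))$ can be much smaller than $v_2(\phi(m))$ (take $m$ a product of Fermat primes); it goes through only after observing that $v_2(\phi(m))-v_2(\phi(d))=\sum_{p\mid m,\ p\nmid d}v_2(p-1)<\log_2(m/d)$, which then yields $b-1+v_2(\phi(d))+2^{a-b}(m/d)\ge a+1+v_2(\phi(m))$ in all remaining cases. So your plan does complete, at the cost of a genuine number-theoretic computation that the swap-based argument avoids entirely; in exchange you obtain the exact count of self-complementary necklaces as a by-product, which the paper's proof does not provide.
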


\begin{proof}
Let $\bar\sigma\in\funs(n,2)$ defined by $\bar\sigma(x)=\sigma(x)+e_1$. Since $\sigma(x+y)=\sigma(x)+\sigma(y)$ we have, $\bar\sigma^2(x)=\sigma(\bar\sigma(x))+e_1=\sigma(\sigma(x)+e_1)+e_1=\sigma^2(x)+\sigma(e_1)+e_1$. More generally, for all $k\geq 1$, 
\[
\bar\sigma^k(x)=\sigma^k(x)+\sigma^{k-1}(e_1)+\cdots+ \sigma^0(e_1). 
\]
In particular, since $\sigma^n$ is the identity, we have 
\[
\bar\sigma^n(x)=x+\sigma^{n-1}(e_1)+\cdots+ \sigma^0(e_1)=x+e_n+e_{n-1}+\cdots+e_1=x+1^n,
\]
so $\bar\sigma^n$ is the negation.
Suppose that $\bar\sigma$ has exactly $r$ limit cycles, with length $c_1,\dots,c_r$.
Since $\bar\sigma^n$ is the negation, $\bar\sigma^{2n}$ is the identity.
Thus $\bar\sigma$ is a bijection and each $c_i$ divides $2n$ but not $n$. Let $\alpha\geq 0$ be the integer such that $n/2^{\alpha}$ is odd; since $n\geq 3$ we have $n>\alpha+1$.
The fact that $\ell_i$ divides $2n$ but not $n$ means that $c_i=2^{\alpha+1} q_i$ for some odd integer $q_i$. Since $\bar\sigma$ is a bijection, we have 
\[
2^{\alpha+1}\sum_{i=1}^r q_i=\sum_{i=1}^r\ell_i=2^n
\]
and thus $q_1+\cdots+q_r=2^{n-\alpha-1}\geq 2$. Since every $q_i$ is odd we deduce that $r$ is even, that is, $p(\bar\sigma)=0$. Let $X$ be the set of configurations $x\in\B^n$ with $x_n=0$, and let  $x^1,\dots,x^k$ be an enumeration of $X$, so $k=2^{n-1}$. Let $F$ be the $k$-swap of $\bar\sigma$ defined by 
\[
F=\bar\sigma\circ (x^1\leftrightarrow x^1+e_n)\circ\cdots\circ(x^k\leftrightarrow x^k+e_n). 
\]
For all $x\in X$ we have $\sigma(x+e_n)=\sigma(x)+e_1$; hence $
F(x)=\bar\sigma(x+e_n)=\sigma(x+e_n)+e_1=\sigma(x)$ and $
F(x+e_n)=\bar\sigma(x)=\sigma(x)+e_1=\sigma(x+e_n)$. Thus $F=\sigma$ is a $k$-swap of $\bar\sigma$ and since $k$ is even, by Lemma~\ref{lem:swap}, $p(\sigma)=p(\bar\sigma)=0$.
\qed
\end{proof}

We now extend the notion of weight to the centralized case. We need the following property. 

\begin{lemma}\label{lem:Hamiltonian}
If $F\in \funs(n,2)$ is a centralized bijection, then $G(F)$ is Hamiltonian. 
\end{lemma}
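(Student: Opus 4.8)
The plan is to avoid building the Hamiltonian cycle by hand, and instead to obtain it from a \emph{cycle cover} produced by Hall's theorem, then to use acyclicity to collapse that cover to a single cycle.

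First I would extract from bijectivity a Hall-type condition on out-neighbourhoods in $G(F)$. For $T\subseteq V$ write $N^+(T)=\{j\in V\mid (i,j)\in E\text{ for some }i\in T\}$ for the set of nodes whose local function reads at least one coordinate of $T$. I claim injectivity of $F$ forces $\card{N^+(T)}\ge\card T$ for every $T$: if not, fix the coordinates outside $T$ and let $x_T$ range over its $2^{\card T}$ values; every output node $j\notin N^+(T)$ has $\inNeighbors(j)\cap T=\emptyset$, so $f_j$ is constant on this set, whence the $2^{\card T}$ configurations have images confined to a set of size at most $2^{\card{N^+(T)}}<2^{\card T}$, producing a collision. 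This is the only place bijectivity is used.

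Second, I would read this condition in the bipartite graph $B$ whose two parts are both copies of $V$, with an edge from the left copy of $i$ to the right copy of $j$ whenever $(i,j)\in E$ (loops included). Hall's marriage theorem then yields a perfect matching of $B$, that is, a permutation $\pi$ of $V$ with $(i,\pi(i))\in E$ for every $i$. The arcs $i\to\pi(i)$ form a spanning family of vertex-disjoint directed cycles of $G(F)$ (a cycle cover), where fixed points of $\pi$ correspond to loops. Finally I would invoke that $F$ is centralized: let $c$ be a node with $G(F)-c$ acyclic. Then $G(F)$ has no loop outside $c$ (a loop at $i\ne c$ would survive in $G(F)-c$), and every directed cycle of length $\ge 2$ must pass through $c$. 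Hence, among the disjoint cycles of the cover, those of length $\ge 2$ all contain $c$ and so there is at most one of them; moreover $\pi$ cannot fix $c$, since for $n\ge 2$ the nonempty set $V\setminus\{c\}$ could then only be covered by loops, which do not exist. Therefore the cycle through $c$ has length $\ge2$, the cover reduces to this single cycle, and it visits all $n$ vertices: it is a Hamiltonian cycle of $G(F)$.

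I expect the main obstacle to be the first step, namely recognizing that the correct consequence of bijectivity is a Hall condition on out-neighbourhoods rather than attempting to trace out a spanning cycle directly; once the cycle cover is available, the acyclicity argument collapsing it to one cycle is routine. The only delicate bookkeeping concerns loops at $c$ and the degenerate small-$n$ cases, both of which are handled by the $n\ge2$ remark above.
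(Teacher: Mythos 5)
Your proposal is correct and follows the same structure as the paper's proof: obtain a spanning disjoint union of cycles in $G(F)$ from bijectivity, then observe that centrality forces every cycle of that cover to pass through the special node, collapsing the cover to a single Hamiltonian cycle. The only difference is that the paper imports the cycle-cover step as a black box from Gadouleau \cite{G18rank}, whereas you re-derive it via the Hall condition on out-neighbourhoods; your derivation is valid and makes the argument self-contained.
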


\begin{proof}
Let $F\in \funs(n,2)$ be a bijection. Gadouleau proves in \cite{G18rank} that $G(F)$ contains a spanning subgraph which is a disjoint union of cycles. In $G(F)$, this spanning subgraph necessarily consists of a single cycle, and thus $G(F)$ is Hamiltonian. 
\qed
\end{proof}

Let $F\in \funs(n,2)$ be a centralized bijection, and let $C$ be a Hamiltonian cycle in $G(F)$.
Let $i\in V$ and let $j$ bet its in-neighbor in $C$.
We denote by $w_i(F,C)$ the number of configurations $x\in\B^n$ such that $x_j<f_i(x)$, and we set
\[
w(F,C)=\sum_{i=1}^n w_i(F,C).
\]
Note that if $F$ is a FSR, there is a unique Hamiltonian cycle $C$ (whose vertices are $1,2\dots,n$ in order) and since $w_i(F,C)=0$ for all $i\neq 1$, we have $w(F)=w_1(F,C)=w(F,C)$ and we recover the previous definition. 

Let $\sigma^C\in\funs(n,2)$ defined by: for all $i\in V$ and $x\in\B^n$, $\sigma^C_i(x)=x_j$ where $j$ is the in-neighbor of $i$ in $C$. Obviously, $\sigma^C$ is isomorphic to $\sigma$ and has thus an even number of limit cycles, and $\sigma^C=\sigma$ when the vertices of $C$ are $1,2,\dots,n$ in order. That $\sigma$ is a $w(F)$-swap of a bijective FSR $F$ is then generalized as follows.

\begin{lemma}\label{lem:swap_to_sigma}
Let $F\in\funs(n,2)$ be a centralized bijection and $C$ a Hamiltonian cycle of $G(F)$. Then $\sigma^C$ is a $w(F,C)$-swap~of~$F$. 
\end{lemma}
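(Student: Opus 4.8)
The plan is to prove Lemma~\ref{lem:swap_to_sigma} by expressing $\sigma^C$ as a composition of $F$ with a sequence of transpositions, one for each configuration counted by the weight $w(F,C)$, and to do so by a node-by-node correction argument that transforms $F$ into $\sigma^C$ coordinate by coordinate. The key observation is that $F$ and $\sigma^C$ differ only in how they compute each coordinate: at node $i$, the target map $\sigma^C$ simply copies the value $x_j$ from the in-neighbor $j$ of $i$ in the Hamiltonian cycle $C$, whereas $F$ applies $f_i$, and $w_i(F,C)$ counts exactly the configurations $x$ where these disagree in the direction $x_j < f_i(x)$. Since both $F$ and $\sigma^C$ are bijections agreeing with the same shift structure along $C$, I expect the total number of swaps needed to reconcile them to be precisely $w(F,C)=\sum_i w_i(F,C)$.

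The main technical device is the centralization hypothesis: by Lemma~\ref{lem:Hamiltonian}, $G(F)$ is Hamiltonian, so the cycle $C$ gives a consistent cyclic ordering of the nodes, and removing the central node makes the dependency structure acyclic. First I would fix the Hamiltonian cycle $C$ and, following \cite{F82}, process the nodes in an order compatible with the acyclic structure obtained after deleting the central node. For each node $i$ in turn, I would introduce the swaps $(x\leftrightarrow x')$ where $x$ and $x'$ are the two configurations that agree everywhere except that they realize the discrepancy between $f_i$ and the copy map at coordinate $i$; crucially, because $F$ has degree bounded and respects the communication structure, each such discrepancy at node $i$ can be localized and corrected independently, contributing exactly one swap per configuration counted in $w_i(F,C)$. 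Summing over all nodes $i$ gives a total of $w(F,C)$ swaps, and the composition of $F$ with all these transpositions yields precisely $\sigma^C$.

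The hard part, and the step I expect to be the main obstacle, is verifying that the individual swaps can be chosen so they genuinely compose to $\sigma^C$ without interfering with one another—that is, that correcting coordinate $i$ does not spoil a previously corrected coordinate. This is where the acyclicity after removing the central node is essential: it guarantees an ordering of nodes in which each local correction only touches coordinates not yet finalized downstream, so the swaps at distinct nodes act on disjoint "control" bits and can be sequenced cleanly. I would therefore carefully argue that, for each non-central node $i$, the swaps needed to align $f_i$ with the copy map $x\mapsto x_j$ can be described as transpositions of pairs differing only at node $i$ (or along a controlled set determined by the acyclic order), so that their effect on coordinate $i$ is exactly the intended correction while leaving the already-processed coordinates fixed. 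Once this independence is established, the count of swaps is immediate from the definition of $w(F,C)$, and the lemma follows.
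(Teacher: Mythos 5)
Your overall plan (peel off the discrepancies between $F$ and $\sigma^C$ one at a time, one swap per configuration counted by $w(F,C)$, by induction) matches the paper's strategy, but the mechanism you propose for the individual swaps is wrong, and the two facts that make the paper's induction work are missing. You propose to correct coordinate $i$ by precomposing $F$ with transpositions of pairs ``differing only at node $i$''. But a swap $(x\leftrightarrow x+e_i)$ modifies $F$ at the two inputs $x$ and $x+e_i$, hence perturbs every local function $f_k$ with $i\in\inNeighbors(k)$ --- that is, the \emph{out}-neighbors of $i$ --- and not $f_i$ itself; there is no reason such a swap aligns $f_i$ with the copy map, and in general it damages several other coordinates simultaneously. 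The paper instead swaps $y\leftrightarrow y+e_j$ where $j$ is the in-neighbor \emph{in $C$} of a node $i$ that is both of positive weight and \emph{central} (its deletion leaves $G(F)$ acyclic). Centrality forces $i$ to be the unique out-neighbor of $j$, and bijectivity then forces $F(x+e_j)=F(x)+e_i$ for all $x$; this identity is exactly what guarantees that the swap changes only $f_i$, flips its value at precisely $y$ and $y+e_j$, decreases $w_i$ by exactly $1$, and keeps $C$ inside $G(F')$. Your appeal to an ordering of nodes compatible with the acyclic structure, so that corrections ``act on disjoint control bits'', does not substitute for this: without the identity $F(x+e_j)=F(x)+e_i$ the swaps are not localized to a single coordinate at all, which is precisely the obstacle you flag but do not overcome.

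Moreover, you never use bijectivity beyond asserting it, whereas the proof needs it twice in a form you do not mention: since $F$ is a bijection each $f_i$ is balanced, so $w_i(F,C)=0$ forces $f_i(x)=x_j$ identically (this gives the base case $F=\sigma^C$ when $w(F,C)=0$), and it forces every zero-weight node to have in-degree $1$, which is what guarantees that at each inductive step one can still find a node of \emph{positive} weight whose deletion makes $G(F)$ acyclic. Without these two ingredients the induction can neither be started nor continued, so as written the proposal does not yield a proof.
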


\begin{proof}
Suppose without loss that the vertices of $C$ are $1,2,\dots,n$ in order, so that $\sigma^C=\sigma$. For all $i\in V$ and $x\in\B^n$, we have 
\begin{equation}\label{eq:wi}
w_i(F,C)=0~\Rightarrow~ f_i(x)=x_{i-1}
\end{equation}
where $x_0$ means $x_n$. Indeed, let $X$ be the set of $x\in\B^n$ with $x_{i-1}=0$. Since $w_i(F,C)=0$, if $x_{i-1}=0$, then $f_i(x)=0$. Hence $X\subseteq f^{-1}_i(0)$. Since $F$ is a bijection we have $|f^{-1}_i(0)|=|f^{-1}_i(1)|=2^{n-1}$ and since $|X^0|=2^{n-1}$ we deduce that $f^{-1}_i(0)=X$. Consequently, if $x_{i-1}=1$, then $f_i(x)=1$. This proves \eqref{eq:wi}. 

We now prove, by induction on $w(F)$, that $\sigma$ is a $w(F,C)$-swap of $F$. If $w(F,C)=0$, then $F=\sigma$ by \eqref{eq:wi}.
This proves the base case.
For the induction, suppose that $w(F)>0$. Since each node $i$ in $G(F)$ with $w_i(F,C)=0$ is, by \eqref{eq:wi}, of in-degree one, and since $G(F)$ is centralized, there is a node $i$ with $w_i(F,C)>0$ whose deletion leaves $G(F)$ acyclic.
Suppose, without loss of generality, that this node is node $1$. Then $1$ is the unique out-neighbor of $n$ since otherwise there is a cycle which does not contain node~$1$. We deduce that, for all $x\in \B^n$,  
\begin{equation}\label{eq:n1}
F(x+e_n)=F(x)+e_1.
\end{equation}
Indeed, since $1$ is the unique out-neighbor of $n$, $F(x+e_n)$ and $F(x)$ differ at most in component $1$, and since $F$ is a bijection this forces $F(x+e_n)=F(x)+e_1$. Let $y\in\B^n$ such that $y_n<f_1(y)$, which exists since $w_1(F,C)>0$, and let  
\[
F'=F\circ (y\leftrightarrow y+e_n). 
\]
Then $F'(y)=F(y+e_n)=F(y)+e_1$, and thus $F'_1(y)=0$. Furthermore, $F'(y+e_n)=F(y)=F(y+e_n)+e_1$ and for all $x\in\B^n$ with $x\neq y,y+e_n$ we have $F'(x)=F(x)$. Hence $G(F')$ has an arc from $n$ to $1$, and since $f'_i=f_i$ for all $i\neq 1$ we deduce that $C$ is contained in $G(F')$, and that $w(F',C)=w(F,C)-1$. By induction, $\sigma$ is a $w(F',C)$-swap of $F'$ and since $F'$ is a $1$-swap $F$ we deduce that $\sigma$ is a $w(F,C)$-swap of $F$.  
\qed
\end{proof}

Putting things together, we obtain the following.
\begin{lemma}\label{lem:same_parity}
Let $F\in\funs(n,2)$ be a centralized bijection and let $C$ be a Hamiltonian cycle of $G(F)$. Then $p(F)$ is the parity of $w(F,C)$. 
\end{lemma}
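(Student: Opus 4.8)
The plan is simply to assemble the three preceding lemmas, since all the substantive work has already been done; this statement is the bookkeeping step that threads them together. First I would invoke Lemma~\ref{lem:swap_to_sigma}: because $F$ is a centralized bijection and $C$ is a Hamiltonian cycle of $G(F)$, the permutation $\sigma^C$ is a $w(F,C)$-swap of $F$. This places $\sigma^C$ at a controlled, known number of swaps away from $F$.

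Next I would feed this swap relation into Lemma~\ref{lem:swap}, taking $F' = \sigma^C$ and $k = w(F,C)$. Since $F$ is a bijection and $\sigma^C$ is a $k$-swap of it, the lemma yields $p(F) = p(\sigma^C)$ if and only if $w(F,C)$ is even. It then remains only to identify $p(\sigma^C)$. As observed just before the lemma, $\sigma^C$ is isomorphic to the circular shift $\sigma$ (it merely reads off the cyclic order of the nodes along $C$ instead of the natural order $1,2,\dots,n$); since the number of limit cycles is an isomorphism invariant of the dynamics graph, we get $p(\sigma^C) = p(\sigma)$, and by Lemma~\ref{lem:even_sigma} we have $p(\sigma) = 0$ for $n \ge 3$.

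Putting these together, $p(F) = p(\sigma^C) = 0$ exactly when $w(F,C)$ is even, and $p(F) \neq 0$ (hence $p(F)=1$) exactly when $w(F,C)$ is odd; that is, $p(F)$ equals the parity of $w(F,C)$, as claimed. There is no genuine obstacle here, as the three inputs are already established; the only point requiring attention is to respect the hypothesis $n \ge 3$ needed to guarantee $p(\sigma)=0$ via Lemma~\ref{lem:even_sigma}, which is precisely the regime of Theorem~\ref{thm:even} in which this lemma will be applied.
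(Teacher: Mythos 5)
Your proof is correct and follows exactly the same route as the paper's: apply Lemma~\ref{lem:swap_to_sigma} to get that $\sigma^C$ is a $w(F,C)$-swap of $F$, use the isomorphism with $\sigma$ and Lemma~\ref{lem:even_sigma} to get $p(\sigma^C)=0$, and conclude via Lemma~\ref{lem:swap}. Your remark about the implicit $n\geq 3$ hypothesis (needed for Lemma~\ref{lem:even_sigma}) is a fair observation that the paper glosses over.
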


\begin{proof}
By Lemma~\ref{lem:swap_to_sigma}, $\sigma^C$ is a $w(F,C)$-swap of $F$. Since $\sigma^C$ is isomorphic to $\sigma$, by Lemma~\ref{lem:even_sigma} we have $p(\sigma^C)=p(\sigma)=0$. By Lemma~\ref{lem:swap} $p(F)=0$ if and only $w(F,C)$ is even. Thus $p(F)$ is the parity of $w(F,C)$. 
\qed
\end{proof}

To conclude, we need the following easy lemma.

\begin{lemma}\label{lem:even_weight}
Let $F\in\funs(n,2,d)$ be a centralized bijection and let $C$ be a Hamiltonian cycle of $G(F)$. If $d<n$, then $w(F,C)$ is even. 
\end{lemma}

\begin{proof}
  Suppose that $d<n$.
  Let $i\in V$ and $j$ be its in-neighbor in $C$.
  Let $X$ be the set of $x\in \B^n$ with $x_j<f_i(x)$.
  Thus $|X|=w_i(F,C)$. Since $d<n$, there exists $k\in V$ such that $G(F)$ has no arc from $k$ to $i$. Let $x\in X$. Since there is an arc from $j$ to $i$ we have $k\neq j$ thus $(x+e_k)_j=0$, and since there is no arc from $k$ to $i$ we have $f_i(x+e_k)=f_i(x)=1$, thus $x+e_k\in X$. We deduce that $x\in X$ if and only if $x+e_k\in X$, which proves that $|X|=w_i(F,C)$ is even. Thus each $w_i(F,C)$ is even, and so is $w(F,C)$. 
\qed
\end{proof}

The proof of Theorem~\ref{thm:even} is now straightforward. 

\begin{proof}[of Theorem \ref{thm:even}]
Let $F\in\funs(n,2,d)$ be a centralized bijection. By Lemma~\ref{lem:Hamiltonian}, $G(F)$ has a Hamiltonian cycle $C$. If $d<n$, then $w(F,C)$ is even by Lemma~\ref{lem:even_weight} and thus $p(F)=0$ by Lemma~\ref{lem:same_parity}. 
\qed
\end{proof}

Theorem~\ref{thm:even} suggests the following strengthening of Aracena-Zapata's conjecture: if $F\in \funs(n,2,d)$ is bijective and $d<n$, then $F$ has an even number of limit cycles. 

\subsection{Gray codes}

A {\em Gray code} is an enumeration of the configurations in $\B^n$ such that two successive configurations differ in one component, and such that the first and last ones also differ in one component. Gray codes are well known structures with many applications \cite{savage1997survey}. In our setting, a Gray code is a Hamiltonian function $F\in\funs(n,2)$ such that, for all $x\in \B^n$, $x$ and $F(x)$ differ in one component. In this section, we prove the following.
All the logarithms are here in base $2$.

\begin{theorem}\label{thm:gray_code}
If $\funs(n,2,d)$ contains a Gray code, then $d\geq \log n$.
\end{theorem}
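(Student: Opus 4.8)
The plan is to track, for each coordinate, how often a Gray code flips it, and to show that a bounded-degree realization forces each such count to be a positive multiple of $2^{n-d}$; summing over all coordinates then yields the bound directly.

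First I would set up notation. Since $F$ is a Gray code, for every $x\in\B^n$ the configurations $x$ and $F(x)$ differ in a unique coordinate $\phi(x)\in V$, so $f_{\phi(x)}(x)=x_{\phi(x)}+1$ and $f_k(x)=x_k$ for $k\neq\phi(x)$. For each $j\in V$ put $T_j=\phi^{-1}(j)=\{x\in\B^n: f_j(x)\neq x_j\}$ and $t_j=\card{T_j}$; the sets $T_j$ partition $\B^n$, so $\sum_{j=1}^n t_j=2^n$. I would first observe that $t_j>0$ for every $j$: otherwise coordinate $j$ would stay constant along the single cycle of $F$, contradicting the fact that this cycle visits every configuration of $\B^n$ (including ones with each value of $x_j$).

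The heart of the proof is the claim that each $t_j$ is a positive multiple of $2^{n-d}$, which I would establish by splitting on whether node $j$ carries a self-loop in $G(F)$, i.e.\ whether $f_j$ depends on input $j$. If $j\in\inNeighbors(j)$, then both $f_j(x)$ and $x_j$ depend only on $x_{\inNeighbors(j)}$, hence so does the membership condition $f_j(x)\neq x_j$; therefore $T_j$ is a union of cylinders $[u]$ with $u\in\B^{\inNeighbors(j)}$, each of size $2^{n-\card{\inNeighbors(j)}}\geq 2^{n-d}$, so $t_j$ is a multiple of $2^{n-d}$. If instead $j\notin\inNeighbors(j)$, then $f_j$ is independent of $x_j$, so for each fixing of $x_{\inNeighbors(j)}$ the value $f_j(x)=c$ is determined and exactly half of the extensions satisfy $x_j\neq c$; summing over the $2^{\card{\inNeighbors(j)}}$ fixings gives $t_j=2^{n-1}$. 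Since $F$ is a bijection, each $f_j$ is balanced and hence non-constant, so $d\geq 1$ and $2^{n-1}=2^{n-d}\cdot 2^{d-1}$ is again a multiple of $2^{n-d}$. In both cases $t_j$ is a positive multiple of $2^{n-d}$, so $t_j\geq 2^{n-d}$.

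Putting this together, $2^n=\sum_{j=1}^n t_j\geq n\,2^{n-d}$, whence $2^d\geq n$ and $d\geq\log n$ (the case $d\geq n$ being trivial, since then $d\geq n\geq\log n$). I expect the only delicate point to be the self-loop dichotomy: the crude estimate using the enlarged neighborhood $\inNeighbors(j)\cup\{j\}$ of size $\leq d+1$ would only give $t_j\geq 2^{n-d-1}$ and hence the weaker bound $d\geq\log n-1$. Recovering the exact factor $2^{n-d}$ requires the observation that when $f_j$ ignores its own coordinate, the flip set $T_j$ has the much larger fixed size $2^{n-1}$, which is comfortably a multiple of $2^{n-d}$.
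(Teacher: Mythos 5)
Your proof is correct and follows essentially the same route as the paper: your flip sets $T_j$ are exactly the sets $X_i$ in the proof of Lemma~\ref{lem:delta}, and your key estimate $t_j\geq 2^{n-d}$ --- including the dichotomy on whether $j\in\inNeighbors(j)$, with the cylinder argument in one case and the exact count $2^{n-1}$ in the other --- is identical to the paper's bound $|X_i|\geq 2^{n-d_i}$. The only difference is cosmetic: the paper wraps the estimate in a more general lemma (bounding the number of arcs via the AM--GM inequality whenever $\delta(F)\leq n^{1-\varepsilon}2^n$) and then specializes to $\varepsilon=1$, whereas you sum the uniform lower bound directly against the partition identity $\sum_j t_j=2^n$; for Gray codes the two computations coincide.
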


This provides a proof, for Gray codes, of the following weaker form of Aracena-Zapata's conjecture: for any fixed $d$, if $n$ is large enough, then $\funs(n,2,d)$ has no Hamiltonian function. 

For the proof we need the following definitions.
Let $\delta(x,y)$ be the {\em Hamming distance} between $x$ and $y$, that is, the number of vertices $i\in V$ such that $x_i\neq y_i$.
Given $F\in\funs(n,2)$ we set
\[
\delta(F)=\sum_{x\in\B^n} \delta(x,F(x)).
\]
If $F$ is a Gray code, then $\delta(F)=2^n$.
Given $i\in V$, let us say that $f_i$ is a {\em trivial component} of $F$ if $f_i$ is constant or $f_i(x)=x_i$ for all $x\in\B^n$.
For instance, if $F$ is a bijection with an odd number of limit cycles, then $F$ has no trivial component. 

\begin{lemma}\label{lem:delta}
Let $0<\varepsilon\leq 1$ and $F\in\funs(n,2)$ without trivial component. If $\delta(F)\leq n^{(1-\varepsilon)}2^n$, then $G(F)$ has at least $\varepsilon n\log n$ arcs.
\end{lemma}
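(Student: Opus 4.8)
The plan is to relate the global quantity $\delta(F)$ to the in-degrees of the nodes of $G(F)$, and then to invoke convexity. First I would rewrite $\delta(F)$ by exchanging the two summations: writing $\inNeighbors(i)$ for the in-neighborhood of node $i$ and $d_i=\card{\inNeighbors(i)}$ for its in-degree,
\[
\delta(F)=\sum_{x\in\B^n}\card{\{i: x_i\neq f_i(x)\}}=\sum_{i=1}^n \delta_i,\qquad \delta_i:=\card{\{x\in\B^n: f_i(x)\neq x_i\}},
\]
so that the number of arcs of $G(F)$ is exactly $\sum_i d_i$ and the hypothesis becomes a bound on $\sum_i\delta_i$.

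The heart of the argument is a per-node lower bound $\delta_i\geq 2^{n-d_i}$ valid whenever $f_i$ is not a trivial component. I would obtain it from the function $g_i:=f_i+x_i$ (sum modulo $2$): since $f_i$ is non-trivial it differs from $x_i$, so $g_i$ is not identically $0$, and $\delta_i$ is exactly the weight $\card{g_i^{-1}(1)}$. If node $i$ carries a self-loop, then $g_i$ depends only on variables in $\inNeighbors(i)$, a set of size $d_i$, so its weight is a nonzero multiple of $2^{n-d_i}$, giving $\delta_i\geq 2^{n-d_i}$ directly. If node $i$ carries no self-loop, a short direct computation shows $\delta_i=2^{n-1}$: for each of the $2^{d_i}$ fixings of the genuine inputs $f_i$ is some constant $c$, and $g_i=c+x_i$ is balanced as $x_i$ varies, contributing $2^{n-d_i-1}$ each. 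Since non-triviality forces $d_i\geq 1$, this is again $\geq 2^{n-d_i}$. I expect this self-loop/no-self-loop split to be the main obstacle: the crude estimate that "$g_i$ depends on at most $d_i+1$ variables" only yields $\delta_i\geq 2^{n-d_i-1}$, which is weaker by a factor of $2$ and would ultimately cost an additive $n$ term, breaking the clean $\varepsilon n\log n$ conclusion.

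Summing the per-node inequality gives $\delta(F)\geq 2^n\sum_i 2^{-d_i}$, and combined with the hypothesis $\delta(F)\leq n^{1-\varepsilon}2^n$ this yields $\sum_i 2^{-d_i}\leq n^{1-\varepsilon}$. Finally I would apply Jensen's inequality to the convex map $t\mapsto 2^{-t}$: writing $m=\sum_i d_i$ for the number of arcs,
\[
n^{1-\varepsilon}\;\geq\;\sum_{i=1}^n 2^{-d_i}\;\geq\; n\,2^{-m/n},
\]
whence $2^{m/n}\geq n^{\varepsilon}$, i.e. $m\geq \varepsilon n\log n$, which is the desired conclusion. The only facts used about non-triviality are that each such $f_i$ is non-constant (hence $d_i\geq 1$, legitimizing Jensen's inequality) and that $f_i\neq x_i$ (hence $g_i\neq 0$), exactly the two ways a component can be trivial.
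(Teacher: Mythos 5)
Your proposal is correct and follows essentially the same route as the paper: the same decomposition $\delta(F)=\sum_i\card{X_i}$, the same per-node bound $\card{X_i}\ge 2^{n-d_i}$ obtained by the same self-loop versus no-self-loop case split, and a final convexity step that is equivalent to the paper's (you apply Jensen to $t\mapsto 2^{-t}$ where the paper applies AM--GM to the product $\prod_i\card{X_i}$, which yields the identical inequality $e\ge n\log(n2^n/\delta(F))$). No gaps.
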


\begin{proof}
  Suppose that $\delta(F)\leq n^{(1-\varepsilon)}2^n$.
  Let $X_i$ be the set of $x\in\B^n$ with $F_i(x)\neq x_i$. Note that $X_i$ is non-empty since otherwise $f_i$ is a trivial component. Note also that 
\begin{equation}\label{eq:delta}
\sum_{i=1}^n |X_i|=\delta(F)\leq n^{(1-\varepsilon)}2^n.
\end{equation}
If $i\not\in \inNeighbors(i)$, then for all $x\in\B^n$, we have $f_i(x)=f_i(x+e_i)$.
Thus exactly one among $x,x+e_i$ belongs to $X_i$, and thus $|X_i|=2^{n-1}\geq 2^{n-d_i}$ since $d_i\geq 1$ (because $f_i$ is not constant). Suppose that $i\in \inNeighbors(i)$, and let $x\in X_i$. For any $y$ with $y_{\inNeighbors(i)}=x_{\inNeighbors(i)}$ we have $y_i=x_i\neq f_i(x)=f_i(y)$ so $y\in X_i$, and we deduce that $|X_i|\geq 2^{n-d_i}$. Thus in any case
\[
d_i\geq n-\log |X_i|.
\] 
Hence the number $e$ of arcs in $G(F)$ is 
\[
e=\sum_{i=1}^n d_i\geq n^2-\sum_{i=1}^n \log |X_i|=n^2-\log \Big(\prod_{i=1}^n |X_i|\Big).
\]
Using the  AM-GM inequality and then \eqref{eq:delta} we have 
\[
\prod_{i=1}^n |X_i|\leq \left(\frac{\sum_{i=1}^n|X_i|}{n}\right)^n\leq \left(\frac{n^{(1-\varepsilon)}2^n}{n}\right)^n=2^{n^2-\varepsilon n\log n}.
\]
We deduce that 
\[
e\geq n^2-\log (2^{n^2-\varepsilon n\log n})=\varepsilon n\log n.
\]
\qed
\end{proof}

\begin{proof}[of Theorem \ref{thm:gray_code}]
Let $F\in\funs(n,2,d)$ be a Gray code. Since $F$ has no trivial component, and since $\delta(F)=2^n$, by Lemma~\ref{lem:delta} (applied with $\varepsilon=1$), $G(F)$ has at least $n\log n$ arcs, and thus the average in-degree is $\log n\leq d$.  
\qed
\end{proof}

\section{Complexity of recognizing bounded-degree dynamics}
\label{sec:complexity}

Fix $d$ and $q$, and consider the following decision problem called BDD$_{q,d}$ (bounded-degree dynamics):
given ${F\in\funs(n,q)}$ represented by Boolean circuits (each $f_i$ is represented by a circuit), is there some ${F'\in\funs(n,q,d)}$ such that their dynamics graphs, ${\dyna(F)}$ and ${\dyna(F')}$, are isomorphic?

\begin{theorem}\label{theo:BDD}
  The problem BDD$_{q,d}$ is in PSPACE for every $d,q$, and co-NP-hard for any $q\ge2$ and ${d\geq 1}$.
\end{theorem}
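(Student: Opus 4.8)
The plan is to prove the two assertions separately.

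For membership in PSPACE, I would proceed as follows. The input $F$ is given by Boolean circuits for each local function, so $F$ can be evaluated on any configuration in polynomial time, and hence the entire dynamics graph $\dyna(F)$—though of exponential size $q^n$—can be traversed in PSPACE. The quantity we must decide is whether \emph{some} $F' \in \funs(n,q,d)$ has $\dyna(F') \cong \dyna(F)$. The key observation is that the isomorphism type of a functional graph is completely determined by a finite combinatorial invariant: the multiset of its connected components, each of which is a ``rho-shape'' consisting of a limit cycle with a forest of in-trees hanging off it. Two ANs are isomorphic iff these multisets of rooted-tree-decorated cycles coincide. So the plan is: first, show that this invariant of $\dyna(F)$ can be computed, or at least its equality with a candidate invariant checked, in polynomial space (one enumerates configurations, follows trajectories, and tallies component shapes using a counter of polynomially many bits since all counts are bounded by $q^n$). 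Second, I would bound the search over $F'$: there are at most $(q^{q^d})^n$ networks of degree $\le d$, and each is describable by polynomially many bits when $d$ is a fixed constant, so we can universally/existentially quantify over a polynomial-size description of $F'$ and verify $\dyna(F') \cong \dyna(F)$ in PSPACE. The careful point is that computing and comparing the full isomorphism invariant naively needs exponential time, so I would instead express the isomorphism check as a PSPACE predicate (e.g.\ ``for every component shape, the number of occurrences in $\dyna(F)$ equals that in $\dyna(F')$,'' where occurrences are counted by an exponential-range but polynomial-width counter, and shapes themselves are enumerated within a space budget). Since PSPACE is closed under the relevant quantifier alternations, this places BDD$_{q,d}$ in PSPACE.

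For the co-NP-hardness, the plan is to reduce from the complement of a known NP-complete problem (tautology, or equivalently unsatisfiability). The idea is to design, from a Boolean formula $\varphi$ on $n$ variables, a network $F_\varphi$ whose dynamics graph has a ``bounded-degree realizable'' isomorphism type precisely when $\varphi$ is unsatisfiable (or a tautology). Concretely, I would use the rigidity results already established: by Lemma~\ref{lem:localrigidity}, for any $F' \in \funs(n,q,d)$ and any small cylinder, the number of preimages of the cylinder is forced to be a multiple of $q^{n-|U|d}$, which strongly constrains the possible preimage-count profiles of bounded-degree dynamics. The reduction would engineer $F_\varphi$ so that its dynamics contains a single anomalous feature—say a configuration with an orphan or a particular small collision pattern—whose presence depends on whether $\varphi$ has a satisfying assignment, and whose isomorphism type is realizable with degree $\le d$ exactly when the anomalous feature is \emph{absent} (i.e.\ $\varphi$ unsatisfiable). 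Because the divisibility constraints of Lemma~\ref{lem:localrigidity} rule out certain preimage profiles, a ``yes'' instance of BDD$_{q,d}$ corresponds to the clean, uniform dynamics obtained when no satisfying assignment exists.

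I expect the main obstacle to lie in the hardness reduction, specifically in arranging that the \emph{entire} isomorphism type of $\dyna(F_\varphi)$, and not merely one local statistic, is realizable by some bounded-degree network exactly in the unsatisfiable case. It is easy to make one parameter (fixed points, rank, a collision count) detect satisfiability, but BDD$_{q,d}$ asks about realizability of the full dynamics graph up to isomorphism, so I must ensure that the ``bulk'' of $\dyna(F_\varphi)$ is independently realizable at degree $\le d$ while the satisfiability-dependent part is the sole obstruction. The cleanest route is to make $F_\varphi$ itself a small perturbation of a known degree-$d$-realizable dynamics (for instance built from the constructions in Section~\ref{s:fsr} or the identity-like networks in the remarks following Proposition~\ref{pro:notidentity}), perturbed only on a subcube controlled by the formula, so that realizability reduces to a single yes/no condition governed by $\varphi$; verifying that this perturbation keeps everything else realizable, and that the divisibility obstruction is both necessary and sufficient, is the delicate accounting step.
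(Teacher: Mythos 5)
Your overall architecture matches the paper's: for membership, guess a polynomial-size description of a candidate $F'\in\funs(n,q,d)$ and verify $\dyna(F)\cong\dyna(F')$ in polynomial space; for hardness, reduce from UNSAT by perturbing the identity on a formula-controlled subcube and detecting satisfiability through an isomorphism invariant that bounded-degree networks cannot exhibit. The hardness half of your plan is essentially what the paper does: it sets $f_1(x)=x_1+1\bmod q$ exactly when $x_D=w$ and $\phi(\theta(x_P))$ holds (identity otherwise), so that $F$ is the identity iff $\phi$ is unsatisfiable, and otherwise $F$ has at least $q^n-q^{n-d-1}>q^n-q^{n-d}$ but fewer than $q^n$ fixed points; the obstruction is Proposition~\ref{pro:notidentity} (not Lemma~\ref{lem:localrigidity}), since any non-identity member of $\funs(n,q,d)$ has at most $q^n-q^{n-d}$ fixed points and the fixed-point count is an isomorphism invariant. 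Your worry about the ``bulk'' of the dynamics being independently realizable evaporates with this choice, because in the unsatisfiable case the constructed network is literally the identity; but as written your reduction is a plan with the key accounting deferred, so you should commit to this (or an equally concrete) construction.

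The genuine gap is in the membership argument. You propose to check $\dyna(F)\cong\dyna(F')$ by comparing, for every component shape, the number of occurrences on each side. The \emph{counts} indeed fit in polynomially many bits, but the \emph{shapes} do not: a single connected component of $\dyna(F)$ is a limit cycle with in-trees attached and may have $\Theta(q^n)$ vertices, so it cannot be written down, enumerated, or used as a loop index within a polynomial space budget. To rescue this route you would need a canonization or pairwise-isomorphism test for implicitly given exponential-size components that itself runs in PSPACE. The paper sidesteps the issue by observing that the two dynamics graphs are planar and invoking the LOGSPACE algorithm for planar graph isomorphism \cite{dlntw10}: running a LOGSPACE machine on an exponential-size input whose individual bits (presence of an arc $x\to y$, i.e.\ $F(x)=y$) are computable in polynomial time yields a PSPACE procedure, and the outer existential guess keeps the whole algorithm in $\mathrm{NP}^{\mathrm{PSPACE}}=\mathrm{PSPACE}$. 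Without some such ingredient your isomorphism check does not go through.
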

\begin{proof}
  For the upper bound, a naive algorithm solving BDD$_{q,d}$ consists in guessing
  ${F'\in\funs(n,q,d)}$ (which can be represented by $n$ Boolean circuits of constant size thanks to the bounded-degree condition)
  and checking that ${\dyna(F)}$ and ${\dyna(F')}$ are isomorphic.
  Given that planar graph isomorphism is computable with a LOGSPACE Turing machine M~\cite{dlntw10}
  and that ${\dyna(F)}$ and ${\dyna(F')}$ are at most exponentially larger than the input (Boolean circuit for $F$), we can test isomorphism of ${\dyna(F)}$ and ${\dyna(F')}$ in PSPACE by simulating each reading step of the read-only input tape of M by a polynomial-time circuit evaluation (testing $F(x)=y$ is the same as testing the presence of the corresponding arc in $\dyna(F)$, and same for $F'$).
  This gives an algorithm in NP with an oracle in PSPACE\ie an algorithm in the complexity class PSPACE.

  For the co-NP-hardness, we reduce from UNSAT. Given a propositional formula $\phi$
  over $p$ variables $v_1,\dots,v_p$, we construct $F\in\funs(n,q)$ on $n=d+1+p$ automata,
  with $D=\{1,\dots,d+1\}$, $P=\{d+2,\dots,n\}$, noting $V=P\cup D$.
  Let $Q=\{0,\dots,q-1\}$, $w\in Q^D$ a specific pattern (e.g., $0^{d+1}$), and for $u\in Q^P$, consider the valuation $\theta(u)$
  sending each $0$ to false and other symbols to true.
  Set the local functions to be the identity $f_i(x)=x_i$ for every $i\in V\setminus\{1\}$, and:
  \[
    f_1(x)=\begin{cases}
      x_1+1 \bmod q &\text{if } x_D=w \text{ and }\phi(\theta(x_P)),\\
      x_1 &\text{otherwise}.
    \end{cases}
  \]
  If $\phi$ is unsatisfiable, then $F$ is the identity;
  hence it is a positive instance of BDD$_{q,d}$.
  If on the other hand $\phi$ is satisfiable, then $F$ is not the identity, but the set $\fp(F)$ of fixed points includes $Q^V\setminus[w]$, so that
  \[\card{\fp(F)}\ge q^n-q^{n-d-1}>q^n-q^{n-d}.\]
  Proposition~\ref{pro:notidentity} then implies that it is a negative instance of BDD$_{q,d}$.\qed
\end{proof}

If we remove the isomorphism condition from the above problem,
we get another one, called BDIG (bounded-degree interaction graph):
given ${F\in\funs(n,q)}$ represented by Boolean circuits,
is there some ${F'\in\funs(n,q,d)}$ such that ${\dyna(F)=\dyna(F')}$?
or, equivalently, is the degree of the interaction graph of $F$ bounded by $d$?

\begin{theorem}
  The problem BDIG is co-NP-complete.
\end{theorem}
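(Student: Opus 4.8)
The plan is to prove that BDIG is co-NP-complete by establishing membership in co-NP and co-NP-hardness separately. For co-NP membership, the natural approach is to characterize when the degree of the interaction graph \emph{exceeds} $d$, and show this is an NP property (so the complement, BDIG, is in co-NP). The interaction graph $G(F)$ has an arc $(i,j)$ precisely when $f_j$ genuinely depends on input $i$, meaning there exist two configurations $x,x'$ agreeing everywhere except at coordinate $i$ with $f_j(x)\neq f_j(x')$. Thus node $j$ has in-degree greater than $d$ if and only if there exists a set $S\subseteq V$ of $d+1$ coordinates, together with, for each $i\in S$, a witnessing pair $(x^i,x'^i)$ differing only at $i$ and certifying that $f_j$ depends on coordinate $i$. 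The plan is to guess $j$, the set $S$ of size $d+1$, and the $d+1$ witnessing pairs; then verify in polynomial time (by evaluating the circuit for $f_j$) that each pair witnesses a genuine dependency. If \emph{some} node has in-degree $>d$, this nondeterministic procedure accepts, so ``degree $>d$'' is in NP, and hence BDIG is in co-NP.

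For co-NP-hardness, I would again reduce from UNSAT, reusing the gadget construction from the proof of Theorem~\ref{theo:BDD} almost verbatim. Given a formula $\phi$ over $p$ variables, build $F\in\funs(n,q)$ on $n=d+1+p$ nodes with the same partition $V=D\cup P$ and the same local functions: $f_i(x)=x_i$ for $i\neq 1$, and $f_1(x)=x_1+1 \bmod q$ when $x_D=w$ and $\phi(\theta(x_P))$ holds, and $f_1(x)=x_1$ otherwise. The key observation is that $f_1$ depends on a node $i\in D\setminus\{1\}$ or $i\in P$ only if flipping that coordinate can change whether the ``increment'' branch is taken, which in turn requires $\phi$ to be satisfiable. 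More precisely, if $\phi$ is unsatisfiable, then $F$ is the identity, so $G(F)$ consists only of self-loops and has degree $1\le d$, making it a positive instance of BDIG. If $\phi$ is satisfiable, then $f_1$ genuinely depends on the membership test $x_D=w$ (on all $d+1$ nodes of $D$) as well as on the variables of $\phi$ occurring in the satisfying assignment, giving node $1$ in-degree strictly greater than $d$, hence a negative instance.

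The main obstacle I anticipate is the hardness direction: ensuring that when $\phi$ is satisfiable, node $1$ really attains in-degree exceeding $d$, and not merely some smaller value. The construction must guarantee that the dependency on all $d+1$ coordinates of $D$ is forced simultaneously. This is arranged by the conjunction $x_D=w$, which can only be tested by reading every coordinate of $D$: if $\phi$ has a satisfying valuation $\theta(u)$, then comparing the configuration $(w,u)$ against any single-coordinate perturbation within $D$ changes the truth of $x_D=w$ and hence the value of $f_1$, witnessing dependency on each of the $d+1$ nodes in $D$. Thus in-degree $\ge d+1>d$, as required. I would finish by noting that this completes both inclusions, so BDIG is co-NP-complete.
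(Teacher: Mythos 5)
Your proposal matches the paper's proof: the same NP certificate for ``some node has in-degree greater than $d$'' (a node $i$ together with $d{+}1$ distinct in-neighbours and $d{+}1$ witnessing pairs of configurations, each pair differing in one coordinate and evaluated through the circuit for $f_i$) gives co-NP membership, and the same UNSAT gadget from Theorem~\ref{theo:BDD} gives hardness. One small slip in your hardness argument: when $q=2$, perturbing coordinate $1$ at the satisfying configuration $(w,u)$ sends $f_1$ from $x_1+1$ to $x_1'=x_1+1$, so that particular pair does \emph{not} witness dependency on node $1$; but $f_1$ still depends on node $1$ via the identity branch (take two configurations outside $[w]$ differing only in coordinate $1$), so node $1$ has in-degree at least $d+1$ in every case and the reduction is correct.
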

\begin{proof}
  The lower bound is given by the same reduction as in the proof of Theorem~\ref{theo:BDD}.
  For the upper bound, a simple co-NP algorithm consists in guessing an automaton $i\in V$,
  $d+1$ configurations $x^1,\dots,x^{d+1}$, and $d+1$ distinct automata $i_1,\dots,i_{d+1}$,
  then checking for each $j\in\{1,\dots,d+1\}$ that $f_i(x^j)\neq f_i(x^j+e_{i_j})$.
  For each $j$, it checks whether $x^j$ witnesses the effective dependency of $i$ on automaton $i_j$.
  It is possible to guess $d+1$ such witnesses if and only if
  the interaction graph of $F$ has degree at least $d+1$.\qed
\end{proof}

\section{Conclusion}
Automata networks are minimalist objects that nevertheless allow the study of many phenomena characterizing complex systems, including the emergence of global properties that are difficult to predict from local properties. A central idea of this work is that bounding the in-degree of the interaction graph is a natural way to enforce a strong form of locality on the update functions, by limiting the number of inputs each automaton can depend on. Studying how this structural constraint affects the global dynamics falls within the broader investigation of local-to-global phenomena.

This is a particularly strong constraint, as a simple counting argument shows that almost no automata network satisfies a bounded degree condition. Yet, despite its naturalness and relevance, little is known about its dynamical consequences. This paper offers a first step in this direction, and we believe that this line of investigation deserves to be pursued much further. In particular, our results naturally give rise to two open problems that we see as important steps.

The first is whether Hamiltonian dynamics are degree-bounded.

The second concerns the complexity of recognizing bounded degree dynamics. We have shown that the problem is in PSPACE and co-NP-hard. We believe that the lower bound offers more room for improvement than the upper bound. It is even tempting to conjecture that the problem is PSPACE-complete, which would suggest that bounding the degree has a subtle impact on dynamics. A modest step in this direction would be to show that the problem is NP-hard.

\section{Declarations}

\textbf{Ethical statements.} \textit{Not applicable.}\\

\textbf{Competing interest.}  \textit{Not applicable.}\\

\textbf{Authors' contributions.} \textit{Contribution to be considered equal among all authors, alphabetical order used.}\\

\textbf{Funding.} \textit{This work was supported by ECOS-ANID project C19E02 between France and Chile,
  ANID-Chile through BASAL FB210005,
  ANR-24-CE48-7504 ALARICE,
  ANR-18-CE40-0002 FANs, 
HORIZON-MSCA-2022-SE-01 101131549 ACANCOS, and
STIC AmSud CAMA 22-STIC-02 (Campus France MEAE).}\\

\textbf{Special thanks.} \textit{The authors thank  Anahí Gajardo, Diego Maldonado, and Christopher Thraves, who participated in elaborating some ideas for the presented results.
Figure~\ref{f:cycles} is built with Ti\textit{k}Z library \href{http://fgt.i3s.unice.fr/}{OODGraph}.}

\bibliographystyle{plain}
\bibliography{bd_an_dynamics}

\end{document}